\newtheorem{definition}{Definition}
\newtheorem{remark}{Remark}
\newtheorem{lemma}{Lemma}
\newtheorem{theorem}{Theorem}
\newtheorem{corollary}{Corollary}
        \newcommand{\todo}[1]{\textcolor{olive}{ TODO: {#1}}}
	\newcommand{\ms}[1]{{\color{red}\texttt{ M S: #1 :M S }}}
	\newcommand{\js}[1]{{\color{orange}\texttt{ J S: #1 :J S }}}
	\newcommand{\todo}[1]{}
	\newcommand{\ms}[1]{}
	\newcommand{\js}[1]{}
\def\De{\Delta}
\def \disjunctunion {\stackrel{.}{\cup}}
\def \aPA {(S,Act,T,s_0)}
\def \anotherPA {(S',Act,T',s'_0)}
\def \stateSet {S}
\def \stateSetDisjunct {S^\circ}
\def \startState {s_0}
\def \actionSet {Act}
\def \someequiv {\equiv_{iso}}
\def \transitionRelation {T}
\def \transitionRelationB {T'}
\newcommand{\Disc}[1] {Dist(#1)}
\def \N{\mathbb{N}}
\def \R{\mathbb{R}}
\def \extremalp {\mathbb{E}}
\title{Lattice structures for bisimilar Probabilistic Automata}
\author{Johann Schuster
\institute{University of the Federal Armed Forces Munich\\
Neubiberg, Germany}
\email{johann.schuster@unibw.de}
\and
Markus Siegle
\institute{University of the Federal Armed Forces Munich\\
Neubiberg, Germany}
\email{markus.siegle@unibw.de}
}
\begin{document}

\maketitle

{\bf Abstract}
\ms{editiert}
The paper shows that there is a deep structure on certain sets of bisimilar Probabilistic Automata (PA).
The key prerequisite for these structures is a notion of compactness of PA. 
It is shown that compact bisimilar PA form lattices. These results are then
used in order to establish normal forms (in the sense of \cite{tacas:13}) not only for finite automata, but also for infinite automata, as long as they are compact. 
\js{OK}


\js{TODO: define the lifting for equivalence relations -- so far it is defined for arbitrary relations, which is problematic.}

\section{Introduction}

Probabilistic automata (PA) \cite{segala:95,segala:95b,lynch2003compositionality}
are a powerful and popular modelling formalism, since they
allow one to reason about the behaviour of systems which feature both randomness and nondeterminism.
For probabilistic automata (PA), several notions of simulations and bisimulations have been defined in the literature \cite{segala:95,parma2007logical}.
Bisimulation relations -- in general, but also in particular for probabilistic automata -- are employed
for characterising equivalent behaviour.
Thus they may serve as the basis for checking whether two systems are equivalent in some sense.
As a straightforward consequence, bisimulation relations are
also very valuable for reducing the size of a system, by replacing it with an equivalent but
smaller one. 
Hereby the goal is to find the smallest possible bisimilar system, i.e.\ the minimal one.
We concentrate on two notions of bisimulation for PA:
strong probabilistic bisimulation and weak probabilistic bisimulation.
For automata with finite sets of states and transitions, both are known to be decidable in polynomial time \cite{segala:02,hermanns:12}.

Recently, the question of calculating
minimal canonical forms (i.e.\ normal forms) of probabilistic automata
has been tackled (again, for automata with finite sets of states and transitions) \cite{tacas:13},
where it turned out that this problem can also be solved in polynomial time.
In the present paper, we go one step further:
We show that finiteness is not required for defining minimal canonical forms.
We point out that also for automata with countably infinite (``countable'' for short) state space,
countable set of actions, and possibly uncountable set of transitions, there are notions of
normal forms.
However, we show that, in contrast to the finite case, normal forms do not always exist.
It rather turns out that an auxiliary condition, namely compactness \cite{panangaden:10}, is crucial
for the existence of normal forms.

In the context of PA, the distributions reached by probabilistic schedulers form convex sets.
The first ones to use this observation were Segala and Cattani who developed a decision algorithm
out of this fact \cite{segala:02}.
In this paper, we combine
the strongly geometric ideas of \cite{segala:02} with
the ideas of compact automata of \cite{panangaden:10}.
This enables us to extend the recent results of \cite{tacas:13} to
a class of PA with countable state space.
In this way we may use geometrical ideas to show that normal forms of PA arise naturally as some
``generating points'' -- in a strong or weak sense -- of convex sets.
However, we also show (by a counterexample) that in general one cannot 
expect normal forms for arbitrary PA with countable state space.
For normal forms wrt.~strong bisimulation, we may directly use a classical result from functional analysis, the theorem of Krein-Milman \cite{Krein1940},
that directly extends the ideas of \cite{segala:02} to the case of countable state spaces.
For weak bisimulation, we extend the results of
\cite{tacas:13} by adding some compactness assumptions.

This paper is, to the best of our knowledge, the first approach to derive lattice structures on bisimilar objects. 
Technically, we work with partially ordered sets of bisimilar PA
on which lattice structures are established, such that
the normal form corresponds to the bottom element of the lattice.
We show that there are unique bottom elements in the lattices we define.
We work on quotients of PA, but do not address the question of how to find such quotients for arbitrary automata.
For the countably infinite case, ideas from \cite{kucera:04,forejt:12} could be used to calculate
quotients, but we do not investigate this further.
However, even if it is hard to find quotients for infinite state systems, we feel that the lattice structures we establish are 
interesting just from an abstract point of view.



The paper is organised as follows: 
Sec.~\ref{sec:prelim} recalls some basic facts on preorders, lattices and probabilistic automata, compact sets and extreme points.
It also defines the notion of compact PA which is essential for our paper.
Operations on bisimilar quotients and sets of quotients
(intersection, union, rescaling)
are defined in Sec.~\ref{sec:operations}.
The core of the paper consists of the results on lattice structures given in Sec.~\ref{sec:structures}.
Some illustrating examples are provided in Sec.~\ref{sec:examples}, and Sec.~\ref{sec:conclusion} concludes the paper.

\section{Preliminaries}
\label{sec:prelim}

\begin{definition}
The disjoint union of two sets $S_1$ and $S_2$ is defined 
as $S_1 \disjunctunion S_2 := \cup_{i \in \{1,2\}} \{ (x,i) | x \in S_i \}$.
\end{definition}
The disjoint union is defined up to isomorphism,
which implies commutativity
$S_1 \disjunctunion S_2 = S_2 \disjunctunion S_1$.
There are canonical embeddings
$S_1\rightarrow S_1 \disjunctunion S_2$, $x \mapsto (x,1)$ and $S_2\rightarrow S_1 \disjunctunion S_2$, $x \mapsto (x,2)$.

\subsection{Partial orders and lattices}

\begin{definition}
A \emph{partial order} is a binary relation $\leq$ over a set $S$ which is \emph{antisymmetric}, 
\emph{transitive}, and either \emph{reflexive} or \emph{irreflexive}, i.e., for all $a$, $b$, and $c$ in $S$, we have that:
\begin{itemize}
    \item If $a \leq b$ and $b \leq a$ then $a = b$ (antisymmetry).
    \item If $a \leq b$ and $b \leq c$ then $a \leq c$ (transitivity).
    \item Either: $a \leq a$ (reflexivity) for all $a \in S$, or: $a \not \leq a$ (irreflexivity) for all $a \in S$.
\end{itemize}
A set with a partial order is called a \emph{partially ordered set} (also called a poset). 
For a poset we write
$a<b$ iff $a\leq b$ and $a\neq b$ (and similar $a>b$).
\end{definition}

\begin{definition}
Let $(S,\leq)$ be a poset, and let $a$ and $b$ be two elements in $S$. 
An element $c$ of $S$ is the \emph{meet} (or \emph{greatest lower bound} or \emph{infimum}) 
of $a$ and $b$, if the following two conditions are satisfied:
\begin{itemize}
    \item $c \leq a$ and $c\leq b$ (i.e., $c$ is a \emph{lower bound} of $a$ and $b$).
    \item For any $d \in S$, such that $d\leq a$ and $d \leq b$, we have $d \leq c$ (i.e., $c$ is greater than or equal to any other lower bound of $a$ and $b$).
\end{itemize}
An element $c$ of $S$ is the \emph{join} (or \emph{least upper bound} or \emph{supremum}) 
of $a$ and $b$, if the following two conditions are satisfied:
\begin{itemize}
    \item $a \leq c$ and $b\leq c$ (i.e., $c$ is a \emph{upper bound} of $a$ and $b$).
    \item For any $d \in S$, such that $a\leq d$ and $b \leq d$, we have $c \leq d$ (i.e., $c$ is smaller than or equal to any other upper bound of $a$ and $b$).
\end{itemize}
\end{definition}

\begin{remark}
If there is a meet (join) of $a$ and $b$, then indeed it is unique, 
since if both $c$ and $c'$ are greatest lower bounds (least upper bounds) of $a$ and $b$, 
then $c\leq c'$ and $c'\leq c$, 
whence indeed $c'=c$.
\end{remark}

\begin{definition}
A poset $(L, \leq)$ is a \emph{lattice} if it satisfies the following two axioms.
\begin{itemize}
  \item (Existence of binary meets)
     For any two elements $a$ and $b$ of $L$, the set $\{a, b\}$ has a meet: $a \land b$ 
     (also known as the greatest lower bound, or the infimum). 
  \item (Existence of binary joins)
     For any two elements $a$ and $b$ of $L$, the set $\{a, b\}$ has a join: $a \lor b$ 
     (also known as the least upper bound, or the supremum).
\end{itemize}
A lattice is called \emph{bounded}, if it has a least and a greatest element, i.e.~elements
$l,g$ such that for all $x \in L:$ $x\leq g$ and $l \leq x$. We will also write $\bot(L)=l$, $\top(L)=g$.
A lattice is called \emph{complete}, if meet and join exist for \emph{all} subsets $A\subseteq L$.
\end{definition}

We will use the following simple but elementary lemma:
\begin{lemma}[descending chain condition (DCC)]
\label{lem:equivalences}
Let $(S,\leq)$ be a poset, then the following statements are equivalent:
\begin{itemize}
  \item Every nonempty subset $A\subseteq S$ contains an element minimal in $A$
  \item $S$ contains no infinite descending chain $a_0 > a_1 > a_2 > \ldots$
\end{itemize}
\end{lemma}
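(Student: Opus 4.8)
The plan is to establish the two implications separately, both by contraposition, after which only elementary manipulation of the strict order $<$ (resp.\ $>$) induced by $\leq$ is required.

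For the direction ``minimal elements always exist $\Rightarrow$ no infinite descending chain'', I would suppose instead that $S$ contains an infinite descending chain $a_0 > a_1 > a_2 > \ldots$ and consider the set $A := \{a_n \mid n \in \N\}$, which is nonempty. For every $n$ the element $a_{n+1}$ lies in $A$ and satisfies $a_{n+1} < a_n$, so no $a_n$ is minimal in $A$; hence $A$ is a nonempty subset with no minimal element, contradicting the first statement. (Transitivity together with antisymmetry, resp.\ irreflexivity, also shows the $a_n$ are pairwise distinct, so $A$ is in fact infinite, but this is not needed.)

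For the converse, ``no infinite descending chain $\Rightarrow$ minimal elements always exist'', I would again argue by contraposition: suppose some nonempty $A \subseteq S$ has no element minimal in $A$. Fix any $a_0 \in A$; inductively, given $a_n \in A$, the failure of minimality of $a_n$ in $A$ provides some $a_{n+1} \in A$ with $a_{n+1} < a_n$. Recursion then yields an infinite descending chain $a_0 > a_1 > a_2 > \ldots$ in $S$, contradicting the second statement.

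The only genuinely delicate point, which I would flag explicitly, is that the recursive construction in the second implication appeals to a choice principle — the axiom of dependent choice — since at each step a witness $a_{n+1}$ must be selected from a generally non-singleton set of strictly smaller elements. Apart from that, the argument is a direct unwinding of the definition of a minimal element and of the derived strict order, so I expect no real obstacle.
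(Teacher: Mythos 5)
Your proof is correct and complete: the paper itself states this lemma without proof (it is introduced as a ``simple but elementary lemma''), and your argument is the standard one, so there is nothing in the source to diverge from. Your explicit remark that the second implication requires the axiom of dependent choice is accurate and is the only non-trivial point of the whole statement.
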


\subsection{Probabilistic Automata}


First we define the notion of discrete subdistribution and related terms and notations:
\begin{definition}[(Sub-)distributions]
Let $S$ be a countable set.
A mapping $\mu: S \rightarrow [0,1]$ is called (discrete) \emph{subdistribution}, if
$\sum_{s\in S}\mu(s)\leq 1$. 
As usual we write $\mu(S')$ for $\sum_{s\in S'}\mu(s)$.
The \emph{support} of $\mu$ is defined as $Supp(\mu):=\{s\in S|\mu(s)>0\}$.
The empty subdistribution $\mu_\emptyset$ is defined by $Supp(\mu_\emptyset)=\emptyset$.
The \emph{size} of $\mu$ is defined as $|\mu|:=\mu(S)$. A subdistribution $\mu$ is
called \emph{distribution} if $|\mu|=1$. The sets $Dist(S)$ and $Subdist(S)$ denote
distributions and subdistributions defined over the set $S$.
Let $\De_s\in Dist(S)$ denote the \emph{Dirac} distribution on $s$, i.e.~$\De_{s}(s)=1$.
For two subdistributions $\mu$, $\mu'$ the sum $\mu'':=\mu\oplus \mu'$ is defined
as $\mu''(s):=\mu(s)+\mu'(s)$ (as long as $|\mu''|\leq 1$).
As long as $c\cdot |\mu|\leq 1$, we denote by $c\mu$ the subdistribution 
defined by $(c\mu)(s):=c\cdot \mu(s)$. For a subdistribution $\mu$ and a state $s\in Supp(\mu)$
we define $\mu-s$ by 
$$(\mu-s)(t)=\begin{cases}
               \mu(t) & \text{ for }t\neq s \\
               0      & \text{ for }t = s
             \end{cases}$$
\end{definition}

\begin{definition}[Lifting of relations on states to distributions]
\label{def:lifting}
Whenever there is an equivalence relation $R\subseteq S\times S$, we may lift it to $Dist(S)\times Dist(S)$ 
in the following way.
For $\mu$, $\gamma\in Dist(S)$ we write $\mu L(R) \gamma$ (or simply, by abuse of notation, $\mu R \gamma$) if and only
if for each $C\in \nicefrac{S}{R}: \mu(C)=\gamma(C)$.
\end{definition}

\begin{definition}[cf.~\cite{segala:02}]
\label{def:pa}
A probabilistic automaton (PA) $P$ is a tuple $\aPA$, where
$\stateSet$ is a countable set of \emph{states}, $\startState \in
\stateSet$ is the \emph{initial} state, $\actionSet$ is a countable set
of \emph{actions} ($\actionSet=H\disjunctunion E$, $H$ hidden actions, $E$ external actions) 
and $\transitionRelation \subseteq \stateSet \times \actionSet \times \Disc{\stateSet}$ is a \emph{transition
  relation} (can be uncountable). Whenever $(s,a,\mu)\in T$ we also write $s\stackrel{a}{\rightarrow}\mu$.
\end{definition}
In this paper we restrict ourselves to the case where $H=\{\tau\}$, i.e.~$E=Act\setminus \{\tau\}$.
Note that by the countability of $S$ it is clear that every distribution over $S$ has at most countable support.



\subsubsection{Weak transitions}

In the following we use the definitions and terminology of \cite{Segala:07}, but we leave out the definitions for labelled transition systems.
The only major difference is that we do not assume \emph{finite branching}, i.e.~for each state
$s$ the set 
$\{(a, \mu)\in Act\times Dist(S)| s\stackrel{a}{\rightarrow}\mu \}$
does not have to be finite.
Given a transition $tr = (s, a, \mu)$, we denote $s$ by $source(tr)$ and $\mu$ 
by $\mu_{tr}$.
An execution fragment of a PA $P=\aPA$ is a finite or infinite
sequence $\alpha = q_0 a_1 q_1 a_2 q_2\cdots$ of alternating states and actions, starting with a state
and, if the sequence is finite, ending in a state, where each $(q_i,a_{i+1},\mu_{i+1})\in T$
and $\mu_{i+1}(q_{i+1})>0$.
State $q_0$, the first state of $\alpha$, is denoted by $fstate(\alpha)$. If $\alpha$ is a finite
sequence, then the last state of $\alpha$ is denoted by $lstate(\alpha)$.
An \emph{execution} of $P$ is an execution fragment (of $P$) where $q_0=s_0$.
We let $frags(P)$ denote the set of execution fragments of $P$ and $frags^\ast(P)$
the set of finite execution fragments of $P$. Similarly, we let $execs(P)$
denote the set of executions of $P$ and $execs^\ast(P)$ the set of finite executions.
Execution fragment $\alpha$ is a \emph{prefix} of execution fragment $\alpha'$, denoted
$\alpha \leq \alpha'$, if sequence $\alpha$ is a prefix of sequence $\alpha'$.

The \emph{trace} of an execution fragment $\alpha$,
written $trace(\alpha)$, is the
sequence of actions obtained by restricting $\alpha$ to the set of external actions, i.e.~$Act\setminus \{\tau\}$.
For a set $E$ of executions of a PA $P$, $traces(E)$
is the set of traces of the executions in $E$. 
We say that $\beta$ is a trace of a PA $P$ if there is an execution $\alpha$ of $P$ with
$trace(\alpha)=\beta$. 
Let $traces(P)$ denote the set of traces of $P$. 

A \emph{scheduler} for a PA $P$ is a function $\sigma: frags^\ast(P)\rightarrow SubDisc(T)$
such that $tr\in supp(\sigma(\alpha))$ implies that $source(tr)=lstate(\alpha)$.
This means that the image $\sigma(\alpha)$ is a \emph{discrete} subdistribution over transitions.
The defect of the subdistribution, i.e.~$1-|\sigma(\alpha)|$ is used for stopping in the current state. In other words,
a scheduler is the entity that resolves nondeterminism in a probabilistic automaton by choosing randomly either
to stop or to perform one of the transitions that are enabled from the current state.
A scheduler $\sigma$ is said to be \emph{deterministic} if for each finite execution fragment
$\alpha$ either $\sigma(\alpha)(T)=0$ or $\sigma(\alpha)=\Delta(tr)$ (Dirac measure for $tr$) for some $tr\in T$.
A scheduler is called \emph{memoryless}, if it depends only on the last state of its argument, that is, for each pair $\alpha_1$,
$\alpha_2$ of finite execution fragments, if $lstate(\alpha_1)=lstate(\alpha_2)$, then $\sigma(\alpha_1)=\sigma(\alpha_2)$.

A scheduler $\sigma$ and a discrete initial probability measure $\mu_0\in Dist(S)$ induce a measure $\epsilon$ on the sigma-field
generated by cones of execution fragments as follows. If $\alpha$ is a finite execution fragment, then the \emph{cone}
of $\alpha$ is defined by $C_\alpha=\{\alpha' \in frags(P)| \alpha \leq \alpha' \}$.
The measure $\epsilon$ of a cone $C_\alpha$ is defined recursively:
If $\alpha=s$ for some $s\in S$ we define $\epsilon(C_\alpha)=\mu_0(s)$.
If $\alpha$ is of the
form $\alpha' a' s'$ it is defined by the equation
$$\epsilon(C_\alpha)=\epsilon(C_{\alpha'})\cdot \sum_{tr \in T(a')}\sigma(\alpha')(tr)\mu_{tr}(s'),$$
where $T(a')$ denotes the set of transitions of $T$ that are labelled by $a'$. Standard measure theoretical arguments
ensure that $\epsilon$ is well defined. We call the measure $\epsilon$ a probabilistic execution fragment of $P$, and we say
that $\epsilon$ is generated by $\sigma$ and $\mu_0$.

Consider a probabilistic execution fragment $\epsilon$ of a PA $P$, with first state $s$, i.e.~$\mu_0=\Delta(s)$, that assigns probability
1 to the set of all finite execution fragments $\alpha$ with trace $trace(\alpha) = \beta$ for some $\beta \in (Act \setminus \{\tau\})^\ast$. Let $\mu$ be the discrete
measure defined by $\mu(s')=\epsilon(\{\alpha | lstate(\alpha)=s'\})$.
Then $s\stackrel{\beta}{\Rightarrow}_C\mu$ is a \emph{weak combined transition} of $P$. We call $\epsilon$ a \emph{representation}
of $s\stackrel{\beta}{\Rightarrow}_C\mu$. If $s\stackrel{\beta}{\Rightarrow}_C\mu$ is induced by a deterministic scheduler, we also write
$s\stackrel{\beta}{\Rightarrow}\mu$.
In case $trace(\alpha)$ is empty we write $s\stackrel{\tau}{\Rightarrow}_C\mu$.

Let $\{s\stackrel{a}{\rightarrow}\mu_i\}_{i\in I}$ be a collection of transitions of a PA $P$, and let $\{c_i\}_{i\in I}$ be a collection
of probabilities such that $\sum_{i\in I}c_i=1$. Then the triple
$(s,a,\sum_{i\in I}c_i\mu_i)$ is called a \emph{(strong) combined transition} of $P$
and we write $s\stackrel{a}{\rightarrow}_C \sum_{i\in I}c_i \mu_i$.
\ms{vorigen Satz ergänzt, da strong combined notation nirgends eingefuehrt war}

\subsection{Bisimulations}


\begin{definition}[Strong probabilistic bisimulation \cite{segala:95b}]
\label{def:strong}
An equivalence relation $R$ on the set of states $S$ of a PA $P=(S, Act, T, s_0)$ is called \emph{strong probabilistic bisimulation} if and only if
$x \mathrel{R} y$ implies for all $a \in Act$: $(x\stackrel{a}{\rightarrow}\mu)$ implies $(y\stackrel{a}{\rightarrow}_C\mu')$ 
with $\mu(C)=\mu'(C)$ for all $C \in \nicefrac{S}{R}$. Two PA are called \emph{strongly bisimilar} if
their initial states are related by a strong probabilistic bisimulation relation on the direct sum of their states.
\end{definition}

\begin{definition}[Weak probabilistic bisimulation \cite{segala:95b}]
\label{def:naiveweak}
An equivalence relation $R$ on the set of states $S$ of a PA $P=(S, Act, T, s_0)$ is called \emph{weak probabilistic bisimulation} if and only if
$x \mathrel{R} y$ implies for all $a \in Act$: $(x\stackrel{a}{\rightarrow}\mu)$ implies $(y\stackrel{a}{\Rightarrow}_C\mu')$
with $\mu(C)=\mu'(C)$ for all $C \in \nicefrac{S}{R}$. Two PA are called \emph{weakly bisimilar} if
their initial states are related by a weak probabilistic bisimulation relation on the direct sum of their states.
\end{definition}

Note that (as we always use combined transitions), in the following we generally omit the word ``probabilistic'' for our bisimulation relations,
even if in the sense of \cite{segala:95b} we speak about strong probabilistic and weak probabilistic bisimulations.
In the sequel we denote, as usual, by $\sim$ a strong bisimulation relation and by $\approx$ a weak bisimulation relation.

It has been shown in \cite{segala:95b} that decision of bisimilarities is closely related to convex sets of reachable distributions,
i.e.~$S_\sim (s,a):=\nicefrac{\{\mu \in Dist(S) | s\stackrel{a}{\rightarrow}_C\mu\}}{\sim}$
and
$S_\approx (s,a):=\nicefrac{\{\mu \in Dist(S) | s\stackrel{a}{\Rightarrow}_C\mu\}}{\approx}.$
Those sets are considered modulo the bisimilarities known and splitters are constructed out of them. For details we refer to \cite{segala:95b}.
The important point is that those sets can be defined for infinite state systems as well. The only difference is that they can no longer
be regarded as subsets of $\mathbb{R}^n$ then.

It is clear by definition that the unreachable parts of an automaton do not play any role for bisimilarity, which motivates the following definition.

\begin{definition}[Reachable states]
Let $P=\aPA$ be a PA, $\stateSet' \subseteq \stateSet$ its set of reachable states, i.e.~those states that can be reached with
non-zero probability by a scheduler starting from $\startState$.
Let $\transitionRelation':=\transitionRelation|_{\stateSet'\times Act \times \Disc{\stateSet'}}$
 be the restriction of the transition relation to $\stateSet'$.
We define $r(P):=(\stateSet',Act, \transitionRelation', \startState)$ and call it the \emph{reachable fraction} of $P$.
\end{definition}



\subsection{Isomorphic \& quotient automata}

We want to be able to identify automata that are basically the same, 
only having different names of their states. The following definition formalises this.

\begin{definition}[Isomorphic automata]
Let $P=\aPA$ and $\stateSet'$ a set with $|\stateSet'|=|\stateSet|$. We call a bijective mapping between sets
$\iota: S \rightarrow S'$ a (set-)\emph{isomorphism}. 
Via this isomorphism we may push forward distributions on $S$ to distributions on $S'$
by the mapping $\mu\circ \iota^{-1}$, as the following diagram shows.
$$
\xymatrix{
  S \ar[d]^{\mu} \ar[r]^{\iota} & S' \ar@{-->}[dl]^{\mu\circ \iota^{-1}} \\
  [0,1]
}
$$
That in turn means that we may push forward transitions between states in the set $S$ to
transitions between states in the set $S'$ using our isomorphism by the following mapping
$$
\begin{array}{lccc}
\iota: & S \times Act \times \Disc{S} & \rightarrow & S' \times Act \times \Disc{S'} \\
         & (s,a,\mu) & \mapsto & (\iota(s),a,\mu\circ \iota^{-1})
\end{array}
$$
By this, we may define the automaton $\iota(P):=(\iota(S),Act,\iota(T),\iota(s_0))$
where we, as usual, denote for mappings $f:A\rightarrow B$ by $f(A):=\{f(a)| a\in A\}\subseteq B$
the \emph{image} of $A$ under $f$.
Two PA $P=\aPA$, $P'=\anotherPA$ are called \emph{isomorphic}, if there exists 
an isomorphism $\iota: S \rightarrow S'$ such that $P'=\iota(P)$. For isomorphic
automata we write $P\equiv_{iso}P'$.
\end{definition}

We would like to stress that we require the \emph{same} set of actions for two
automata to be isomorphic. The reason for this is that also for bisimulations the \emph{same} set 
of actions is considered.


\begin{lemma}
The relation $\equiv_{iso}$ is an equivalence relation.
\end{lemma}

\begin{proof}
Follows by direct verification from the properties of isomorphisms
\end{proof}

Whenever an equivalence relation $R$ is given ($R$ will be chosen to be $\sim$ or $\approx$),
it is common to look at a special automaton that is defined over equivalence classes of states. 
The following definition formalises this.

\begin{definition}[Quotient automaton]
Let $P=\aPA$ be a PA 
and $R$ an equivalence relation over 
$\stateSet$.
We write $\nicefrac{P}{R}$ to denote the 
quotient automaton of $P$ wrt.~$R$, that is
$$\nicefrac{P}{R}=(\nicefrac{\stateSet}{R},Act,\nicefrac{\transitionRelation}{R},[\startState]_R)$$
with $\nicefrac{\transitionRelation}{R}\subseteq \nicefrac{\stateSet}{R}\times Act \times \Disc{\nicefrac{\stateSet}{R}}$ such that
$([s]_R,a,\mu)\in \nicefrac{\transitionRelation}{R}$ if and only if there exists a state
$s' \in [s]_R$ such that $(s',a,\mu')\in \transitionRelation$ and $\forall [t]_R\in \nicefrac{\stateSet}{R}: \mu([t]_R)=\sum_{t'\in [t]_R}\mu'(t')$.
We call an automaton a \emph{quotient wrt.~ $R$} if it holds that
$P\equiv_{iso}\nicefrac{P}{R}$.
\end{definition}

\subsection{Compact automata}
\label{sec:compactness}

One key property when searching for lattice structures on PA is compactness.
Compactness definitions already have been introduced in \cite{panangaden:10} for the 
alternating model. 
\begin{definition}[adapted from Def.~9 in \cite{panangaden:10}]
\label{def:metric}
Let $P=\aPA$ be a PA. The function $d$ on $Dist(S)\times Dist(S)$ is defined by
$$d(\mu_1, \mu_2):= sup_{A\subseteq S}|\mu_1(A)-\mu_2(A)|$$
\end{definition}


Even if in \cite{panangaden:10} it is mentioned without a proof, that function $d$ defined above is really a metric, we give an explicit proof here.

\begin{lemma}
\label{lem:metric}
The function $d$ in Def.~\ref{def:metric} is a metric on $Dist(S)$.
\end{lemma}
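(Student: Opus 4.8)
The plan is to verify the three defining properties of a metric for $d(\mu_1,\mu_2) = \sup_{A\subseteq S}|\mu_1(A)-\mu_2(A)|$ directly from the definition. First I would establish non-negativity and finiteness: each term $|\mu_1(A)-\mu_2(A)|$ lies in $[0,1]$ since $\mu_1,\mu_2$ are distributions, so the supremum exists and lies in $[0,1]$. For the identity of indiscernibles, one direction is trivial: if $\mu_1=\mu_2$ then every term vanishes, so $d(\mu_1,\mu_2)=0$. Conversely, if $d(\mu_1,\mu_2)=0$ then $|\mu_1(A)-\mu_2(A)|=0$ for all $A\subseteq S$; taking $A=\{s\}$ for each $s\in S$ gives $\mu_1(s)=\mu_2(s)$, hence $\mu_1=\mu_2$.

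Symmetry is immediate since $|\mu_1(A)-\mu_2(A)| = |\mu_2(A)-\mu_1(A)|$ for every $A$, so the suprema agree. For the triangle inequality, I would fix $\mu_1,\mu_2,\mu_3\in Dist(S)$ and an arbitrary $A\subseteq S$, then estimate
$$|\mu_1(A)-\mu_3(A)| \leq |\mu_1(A)-\mu_2(A)| + |\mu_2(A)-\mu_3(A)| \leq d(\mu_1,\mu_2) + d(\mu_2,\mu_3).$$
Since this bound is independent of $A$, taking the supremum over all $A\subseteq S$ on the left yields $d(\mu_1,\mu_3)\leq d(\mu_1,\mu_2)+d(\mu_2,\mu_3)$.

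There is no real obstacle here; the only point requiring mild care is making sure the supremum is well-defined, which follows from the uniform bound $|\mu_1(A)-\mu_2(A)|\leq 1$, and noting that in the identity-of-indiscernibles argument one needs only singletons $A=\{s\}$ (well-defined because $S$ is countable and subdistributions are functions on $S$). Everything else is a routine unwinding of the definition of the supremum metric, so the proof is short.
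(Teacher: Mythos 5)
Your proof is correct and follows essentially the same route as the paper's: non-negativity and symmetry are immediate, identity of indiscernibles is reduced to singletons using the discreteness of the measures, and the triangle inequality follows from the standard pointwise estimate before passing to the supremum. No issues.
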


\begin{proof}
It is clear by definition that $d$ is non-negative and symmetric.
Identity of indiscernibles is worth a thought: For general measures we only know when $d(\mu_1, \mu_2)=0$ that $\mu_1$ and $\mu_2$ coincide up to a 
zero set. As $Dist(S)$ is a set of \emph{discrete} probability measures, the values $\{(s,\mu(s))|s\in S\}$ completely determine $\mu$. Therefore
$d(\mu_1, \mu_2)=0 \Rightarrow \mu_1=\mu_2$.
Also the triangle inequality holds: $d(\mu_1, \mu_3)=sup_{A\subseteq S}|\mu_1(A)-\mu_3(A)|=sup_{A\subseteq S}|\mu_1(A)-\mu_2(A)+\mu_2(A)-\mu_3(A)|
\leq sup_{A\subseteq S}(|\mu_1(A)-\mu_2(A)|+|\mu_2(A)-\mu_3(A)|) \leq sup_{A\subseteq S}|\mu_1(A)-\mu_2(A)|+sup_{A\subseteq S}|\mu_2(A)-\mu_3(A)|=d(\mu_1, \mu_2)+d(\mu_2, \mu_3)$
\end{proof}

\begin{corollary}
Let $P=\aPA$ be a PA and fix an equivalence relation $R\in \{\sim, \approx\}$ on $S$, $s\in S$ and $a\in Act$. Then for
every pair $(s,a)\in S\times Act$ there is a metric space $(S_R(s,a),d)$.
\end{corollary}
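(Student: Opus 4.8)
The plan is to check that $d$ --- shown to be a metric on $Dist(S)$ in Lemma~\ref{lem:metric} --- still makes sense and stays a metric after we pass to the $R$-quotient, i.e.\ on $S_R(s,a)\subseteq Dist(S)/L(R)$. The route I would take is to realise each $R$-class of distributions as an honest distribution over the countable set $\nicefrac{S}{R}$, invoke Lemma~\ref{lem:metric} there, and then use the elementary fact that a subset of a metric space is again a metric space.

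Concretely, first I would send a class $[\mu]\in Dist(S)/L(R)$ to $\bar\mu\in Dist(\nicefrac{S}{R})$ defined by $\bar\mu([t]_R):=\sum_{t'\in[t]_R}\mu(t')=\mu([t]_R)$. By Definition~\ref{def:lifting}, two distributions are $L(R)$-related precisely when they assign the same mass to every $R$-class, so $[\mu]\mapsto\bar\mu$ is both \emph{well defined} (independent of the representative) and \emph{injective}. Since $S$ is countable, so is $\nicefrac{S}{R}$, and Lemma~\ref{lem:metric} applies verbatim with $\nicefrac{S}{R}$ in place of $S$, so $d$ is a metric on $Dist(\nicefrac{S}{R})$. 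Setting $d([\mu],[\gamma]):=d(\bar\mu,\bar\gamma)=\sup_{\mathcal A\subseteq\nicefrac{S}{R}}|\bar\mu(\mathcal A)-\bar\gamma(\mathcal A)|$, non-negativity, symmetry and the triangle inequality are inherited at once, and identity of indiscernibles follows by combining the corresponding part of Lemma~\ref{lem:metric} (on $\nicefrac{S}{R}$) with injectivity of $[\mu]\mapsto\bar\mu$. Finally $S_R(s,a)$ is, via this identification, a subset of $Dist(\nicefrac{S}{R})$, and the restriction of a metric to a subset is still a metric; this yields the asserted metric space $(S_R(s,a),d)$.

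The point that needs care --- the reason the corollary is not literally a restatement of Lemma~\ref{lem:metric} --- is that the ``naive'' function $(\mu,\gamma)\mapsto\sup_{A\subseteq S}|\mu(A)-\gamma(A)|$ on $Dist(S)$ does \emph{not} descend to $Dist(S)/L(R)$: if $s_1\mathrel R s_2$ with $s_1\neq s_2$, then $\De_{s_1}$ and $\De_{s_2}$ are $L(R)$-equivalent but $d(\De_{s_1},\De_{s_2})=1$. One therefore must restrict the supremum to $R$-saturated subsets of $S$, which is exactly what replacing $S$ by $\nicefrac{S}{R}$ achieves; this is the only real subtlety. A second, minor prerequisite is that $L(R)$ is itself an equivalence relation on $Dist(S)$ (so that $S_R(s,a)$ is genuinely a set of classes), which is immediate for $R\in\{\sim,\approx\}$ since ``assigning equal mass to every class of an equivalence relation'' is visibly reflexive, symmetric and transitive.
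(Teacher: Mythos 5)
Your proof is correct and is essentially the argument the paper intends: the paper states this corollary without proof, treating it as immediate from Lemma~\ref{lem:metric} together with the fact that a subset of a metric space is a metric space. Your additional observation --- that one must identify $L(R)$-classes with elements of $Dist(\nicefrac{S}{R})$ rather than take the supremum over arbitrary subsets of $S$, since the latter does not descend to the quotient --- is a genuine subtlety the paper glosses over, and your handling of it is right.
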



\begin{definition}[adapted from Def.~10 in \cite{panangaden:10}]
\label{def:compactness}
Let $P=\aPA$ be a PA and consider some fixed equivalence relation $R\in \{\sim, \approx\}$ on $S$, $s\in S$ and $a\in Act$. 
We say that state $s$ is a-compact wrt.~$R$, if the set $S_R(s,a)$
%
%
is compact under the metric $d$. $P$ is called \emph{compact} wrt.~$R$, if $\forall s\in S \; \forall a\in Act: s$ is $a$-$compact$ wrt.~$R$.
We omit ``wrt.~$R$'' if the context is clear.
\end{definition}

The definition given by Desharnais et al.~can be seen also in the view of products of metric spaces. We recall the
definition of a product of metric spaces:

\begin{definition}
\label{def:product}
If $(M_i,d_i)$, $i\in \mathbb{N}$ are metric spaces, we define a metric on the countably infinite Cartesian product $\prod_{i\in \mathbb{N}}M_i$ by 
$$d(x,y)=\sum_{i=1}^\infty \frac{1}{2^i}\frac{d_i(x_i,y_i)}{1+d_i(x_i,y_i)}.$$
It is well-known that this construction leads to the metric space 
$(\prod_{i\in \mathbb{N}}M_i, d)$ \todo{add a citation here!}.
\end{definition}
This metric metrizes the product topology - under this metric all projection functions are continuous. 

\ms{Man muesste deutlicher machen, dass mit $d$ die Metrik aus Def.12 gemeint ist}
\js{Habe die fruehere Def in ein Korollar zu Lemma \ref{lem:metric} gewandelt. Dort ist der Kontext klar, denke ich. Uebrig geblieben ist nur der
Satz mit dem Produkt.}
We are now able to relate the metric space $(\prod_{(s,a)\in S\times Act} S_R(s,a), d)$ to the definition of compactness given in Def.~\ref{def:compactness}.

\begin{lemma}[Reformulation of Def.~\ref{def:compactness}]
Let $P=\aPA$ be a quotient of a PA wrt.~some equivalence relation $R$. 
The compactness of $P$ defined in Def.~\ref{def:compactness} is equivalent to the compactness of the 
metric space $(\prod_{(s,a)\in S\times Act} S_R(s,a), d)$.
\end{lemma}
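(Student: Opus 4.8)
The statement asserts that $P$ is compact in the sense of Def.~\ref{def:compactness} (i.e.\ each $S_R(s,a)$ is compact) if and only if the product metric space $\left(\prod_{(s,a)\in S\times Act} S_R(s,a),\,d\right)$ is compact, where $d$ is the product metric of Def.~\ref{def:product}. The plan is to reduce this to the classical Tychonoff theorem for countable products of metric (hence second-countable) spaces, which in the metric setting can be proved directly without the axiom of choice beyond countable choice.

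First I would observe that the index set $S\times Act$ is countable, since both $S$ and $Act$ are countable by Def.~\ref{def:pa}; fix an enumeration $(s_i,a_i)_{i\in\mathbb{N}}$ so that Def.~\ref{def:product} literally applies, giving the metric $d$ on $\prod_i S_R(s_i,a_i)$ and, as noted after Def.~\ref{def:product}, metrizing the product topology with all projections continuous. For the forward direction, assume each factor $S_R(s,a)$ is compact. Then the product is compact by Tychonoff's theorem; alternatively, to stay within the metric framework, one shows the product is sequentially compact: given a sequence in the product, extract by a diagonal argument a subsequence that converges in each coordinate (using compactness, i.e.\ sequential compactness, of each metric factor), and then check that coordinatewise convergence implies convergence in the metric $d$ — this last point follows because the tail $\sum_{i>N} 2^{-i}\frac{d_i(x_i,y_i)}{1+d_i(x_i,y_i)} \le 2^{-N}$ can be made arbitrarily small uniformly, and the finite head converges. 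For the converse, assume the product $\left(\prod_{(s,a)} S_R(s,a),\,d\right)$ is compact. Since each projection $\pi_{(s,a)}$ is continuous and surjective onto $S_R(s,a)$, the continuous image of a compact space is compact, so each $S_R(s,a)$ is compact; hence $P$ is compact wrt.~$R$ by Def.~\ref{def:compactness}. One should note here the mild subtlety that if some factor $S_R(s,a)$ were empty the product would be empty and the statement degenerates; but $S_R(s,a)$ always contains at least the class of the empty (or of some reachable) distribution when $s\stackrel{a}{\rightarrow}$ is defined, and when no $a$-transition leaves $s$ the set is either empty or trivial — in either case compactness is immediate, so the equivalence is unaffected. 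It is worth remarking that $P$ being a quotient (so that $S_R(s,a)$ is taken modulo $R$, as in the hypothesis) is used only to make the spaces $S_R(s,a)$ well-defined metric spaces via the Corollary following Lemma~\ref{lem:metric}; it plays no further role.

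The main obstacle is the forward direction: one must either invoke Tychonoff (which for a countable product of compact metric spaces is elementary and choice-light) or carry out the diagonal sequential-compactness argument carefully, and in the latter case the genuinely non-routine step is verifying that coordinatewise convergence of a sequence is equivalent to convergence in the product metric $d$ of Def.~\ref{def:product}. This requires the uniform tail estimate on $\sum_{i>N} 2^{-i}\frac{d_i(\cdot,\cdot)}{1+d_i(\cdot,\cdot)}$ together with continuity of $t\mapsto \frac{t}{1+t}$ at $0$; everything else (countability of the index set, continuity and surjectivity of projections, continuous images of compacta being compact) is standard and can be stated without detailed computation.
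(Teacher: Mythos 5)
Your argument is correct and follows essentially the same route as the paper's own proof: Tychonoff's theorem for the forward direction, and continuity of the projections (continuous images of compact sets being compact) for the converse. The additional material you supply --- the explicit enumeration of the countable index set, the sequential-compactness/diagonal alternative, and the remark on empty factors --- is sound but not needed beyond what the paper already records.
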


\begin{proof}
\todo{make sure that this works on our topology. The usual product topology is defined by sup}
Assume that Def.~\ref{def:compactness} holds.
Tychonoff's theorem (or equivalently the axiom of choice) ensures that the product
of compact spaces is compact again, so the product space in Def.~\ref{def:product} is compact.

The projections $\prod_{(s,a)\in S\times Act} S_R(s,a)\rightarrow S_R(s,a)$ are continuous for every pair $(s,a)$.
Continuous mappings between metric spaces map compact sets to compact sets. So Def.~\ref{def:compactness} also holds.
\end{proof}

The question about compactness is only relevant in the case of infinite automata.
As long as PA are constructed out of finite sets (both states and transitions),
they are compact anyway. This is due to the following classical theorem:
\begin{theorem}[Heine-Borel]
For a subset of $M\subseteq\mathbb{R}^n$ ($\mathbb{R}^n$ the metric space of all n-Tuples with Euclidean metric) the following
statements are equivalent:
\begin{itemize}
  \item $M$ is bounded and closed
  \item every open cover of $M$ has a finite subcover, that is, $M$ is compact
\end{itemize}
\end{theorem}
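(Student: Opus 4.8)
The plan is to prove the two implications separately by the standard open-cover arguments; only the second direction uses anything beyond elementary manipulations with balls and finite unions.

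First I would show that compactness implies closedness and boundedness. For boundedness, note that $\{B(0,k)\}_{k\in\mathbb{N}}$ (open balls around the origin) is an open cover of $M$; a finite subcover is contained in the largest such ball $B(0,K)$, so $M\subseteq B(0,K)$ is bounded. For closedness I would argue that the complement $\mathbb{R}^n\setminus M$ is open: given $x\notin M$, for each $y\in M$ put $r_y:=\tfrac12 d(x,y)>0$; the balls $\{B(y,r_y)\}_{y\in M}$ cover $M$, so some finite subcover $B(y_1,r_{y_1}),\dots,B(y_m,r_{y_m})$ exists, and then the ball $B(x,r)$ with $r:=\min_i r_{y_i}$ is disjoint from each $B(y_i,r_{y_i})$ by the triangle inequality, hence disjoint from $M$. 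So every point outside $M$ has a neighbourhood outside $M$, i.e.\ $M$ is closed.

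For the converse, the key step is to prove that a closed box $Q=[a_1,b_1]\times\dots\times[a_n,b_n]$ is compact. I would use the bisection (``lion-hunting'') argument: assuming some open cover $\mathcal{U}$ of $Q$ admits no finite subcover, repeatedly halve each coordinate interval, splitting $Q$ into $2^n$ congruent sub-boxes; at each stage at least one sub-box again admits no finite subcover from $\mathcal{U}$. This yields a nested sequence $Q\supseteq Q_1\supseteq Q_2\supseteq\dots$ with diameters tending to $0$, and by completeness of $\mathbb{R}$ (the nested-interval property applied coordinatewise) their intersection is a single point $p\in Q$. Some $U\in\mathcal{U}$ contains $p$ and hence, for $k$ large enough, contains all of $Q_k$; but then $\{U\}$ is a finite subcover of $Q_k$, contradicting the construction. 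Hence $Q$ is compact.

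Finally I would deduce the general case: if $M$ is bounded it lies inside some box $Q$ as above, and if $M$ is also closed then, given an open cover $\mathcal{V}$ of $M$, the family $\mathcal{V}\cup\{\mathbb{R}^n\setminus M\}$ is an open cover of $Q$; extracting a finite subcover of $Q$ and discarding $\mathbb{R}^n\setminus M$ leaves a finite subfamily of $\mathcal{V}$ still covering $M$, so $M$ is compact. The only genuinely non-formal ingredient is the completeness of $\mathbb{R}$ invoked in the bisection step; everything else is bookkeeping with balls and finite unions, so I expect that step — packaging the nested-interval argument cleanly in $n$ dimensions so that the diameters provably shrink to zero — to be the main point to get right.
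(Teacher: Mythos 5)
Your proof is correct and complete: the two easy implications (compact $\Rightarrow$ bounded via the cover by balls $B(0,k)$, compact $\Rightarrow$ closed via the separating-balls argument) and the converse via compactness of a closed box (bisection plus the nested-interval property) together with the standard trick of adjoining $\mathbb{R}^n\setminus M$ to a cover of the box. Note, however, that the paper states Heine--Borel as a classical result and gives no proof at all, so there is nothing to compare against; your write-up is simply the standard textbook argument, and the one step you flag as delicate (that the diameters of the nested boxes shrink to zero, so that a single member of the cover eventually swallows some $Q_k$) is indeed the only place where completeness of $\mathbb{R}$ enters.
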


An important theorem for compact sets is the Krein-Milman theorem (see below). It only works for locally convex vector spaces. 
We use as a basic fact, that the metric space $(l^\infty$, $d(x,y)=sup_{i\in \N}|x_i-y_i|)$ of bounded sequences in $\R$ is locally convex (using the seminorms $p_i(\{x_n\}_n)=|x_i|$, $i\in \N$).
Sequences may be used to characterise distributions in the following way. Assume that the states are ordered by the natural numbers. Define the mapping
$f:(S_R(s,a),d)\rightarrow (l^\infty, d)$, $\mu\mapsto (\mu(x_1),\mu(x_2),\ldots)$. We show that this mapping is continuous, which is straight-forward,
as singletons are also subsets.
$sup_{A\subseteq S}|\mu_1(A)-\mu_2(A)|<\epsilon \Rightarrow sup_{a\in S}|\mu_1(a)-\mu_2(a)|<\epsilon$.
As continuous images of compact sets are compact, we know that $f(S_R(s,a))$ is compact.
By definition of convex schedulers it is also clear that $f(S_R(s,a))$ is convex. 
The product of $\prod_{(s,a): S_R(s,a)\neq \emptyset} f(S_R(s,a))$ is also compact (Tychonoff's Theorem -- accepting the axiom of choice) and convex
as a product of convex sets. 
These sets will be used later to apply the Krein-Milman theorem.

\begin{definition}[Extreme points]
\label{def:extremal_points}
An \emph{extreme point} of a convex set $A$ is a point $x\in A$ with the property that if $x=cy+(1-c)z$ with $y,z \in A$ and $c \in [0,1]$,
then $y=x$ or $z=x$. $\extremalp (A)$ will denote the set of extreme points of $A$.
\end{definition}

%

Now the Krein-Milman theorem says that a compact convex subset of a locally convex vector space is the convex hull
of its extreme points:

\begin{theorem}[Krein-Milman \cite{Krein1940}]
\label{th:krein}
Let $A$ be a compact convex subset of a locally convex vector space $X$, then $A=CHull(\extremalp (A))$.
\end{theorem}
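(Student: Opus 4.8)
The statement is the classical Krein--Milman theorem, so the plan is to run the standard argument, organised around two devices: a Zorn's-lemma argument that, exploiting compactness of $A$, manufactures extreme points inside suitable subsets, and the Hahn--Banach separation theorem, which is the single place where local convexity of $X$ is indispensable. For the bookkeeping it is convenient to call a nonempty closed set $F\subseteq A$ a \emph{face} of $A$ if $cy+(1-c)z\in F$ with $y,z\in A$ and $c\in(0,1)$ forces $y,z\in F$; note that $A$ is a face of itself, and that $x\in\extremalp(A)$ if and only if $\{x\}$ is a face.

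\textbf{Step 1: every face of $A$ contains an extreme point of $A$.} Fix a face $F$ and partially order the set of faces of $A$ contained in $F$ by reverse inclusion. Given a chain, its intersection is closed, it is again a face (a routine check on convex combinations), and it is nonempty because $A$ is compact and the chain has the finite intersection property. Zorn's lemma thus gives a face $F_0\subseteq F$ that is minimal with respect to inclusion. If $F_0$ contained two distinct points $p\neq q$, Hahn--Banach would supply $\ell\in X^\ast$ with $\ell(p)\neq\ell(q)$; since $F_0$ is compact and $\ell$ continuous, $m:=\max_{x\in F_0}\ell(x)$ is attained, and $\{x\in F_0 : \ell(x)=m\}$ is a nonempty, closed, proper subset of $F_0$ which is again a face of $A$ (the level set at the maximum of a continuous linear functional over a face is a face). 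This contradicts minimality, so $F_0$ is a singleton, i.e. an extreme point of $A$ lying in $F$.

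\textbf{Step 2: $A$ is recovered from its extreme points.} Put $B:=\overline{CHull(\extremalp(A))}$. Since $A$ is closed and convex and contains $\extremalp(A)$, we get $B\subseteq A$. Assume there is $x_1\in A\setminus B$. Hahn--Banach separation (again using local convexity) yields $\ell\in X^\ast$ and $\alpha\in\R$ with $\ell(x_1)>\alpha\geq\ell(b)$ for every $b\in B$. Let $m:=\max_{x\in A}\ell(x)$, which exists by compactness and satisfies $m\geq\ell(x_1)>\alpha$; then $F:=\{x\in A : \ell(x)=m\}$ is a face of $A$ (the case $F=A$ of the observation in Step 1), and $F\cap B=\emptyset$ because $\ell\leq\alpha<m$ on $B$. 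By Step 1, $F$ contains a point of $\extremalp(A)\subseteq B$, a contradiction. Hence $A=B$.

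The genuinely non-elementary inputs, and hence the main obstacles, are exactly these two: the finite-intersection argument that keeps chains of faces nonempty, and the two applications of Hahn--Banach, which force $X$ to be locally convex (and Hausdorff, so that distinct points can be separated) -- this is the hypothesis that cannot be weakened. Strictly speaking the argument yields $A=\overline{CHull(\extremalp(A))}$; in the settings relevant to this paper the outer closure may be omitted, since the pertinent sets $f(S_R(s,a))$ and their finite products are compact, so the theorem is applied componentwise and all convex combinations stay inside a fixed compact set.
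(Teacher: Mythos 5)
The paper does not prove this statement: it is imported as a classical result with a citation to Krein--Milman, so there is no internal proof to compare against. Your argument is the standard textbook proof (minimal faces via Zorn's lemma plus compactness, singletons forced by Hahn--Banach, then separation of a hypothetical point of $A\setminus \overline{CHull(\extremalp(A))}$), and it is correct, including the observation that local convexity together with the Hausdorff property enters exactly through the two separation steps. One caveat on your closing remark: what the argument (and the theorem) actually delivers is $A=\overline{CHull(\extremalp(A))}$, and in infinite-dimensional spaces such as $l^\infty$, where the paper applies the theorem, the convex hull of the extreme points of a compact convex set need genuinely not be closed; your justification for dropping the closure (``all convex combinations stay inside a fixed compact set'') does not establish that every point of $A$ is a \emph{finite} convex combination of extreme points. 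So the discrepancy you flag is real, but it lies in the paper's statement of the theorem, not in your proof.
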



\section{Operations on bisimilar quotient automata}
\label{sec:operations}

\begin{definition}[Set operations on bisimilar quotients]
\label{def:setoperations}
Let $P=\aPA$ and $P'=\anotherPA$ be PA. 
Let $\stateSetDisjunct:=\stateSet \disjunctunion \stateSet'$ be the disjoint union of state spaces
and consider a fixed bisimulation relation $R\in \{\sim, \approx\}$, $R\subseteq \stateSetDisjunct \times \stateSetDisjunct$.
Let $P$ and $P'$ be bisimilar, i.e.~$\startState R \startState'$. In the following, we assume that $P$ and $P'$ are
quotients wrt.~$R$.
The intersection of $S$ and $S'$ (and the union) is performed in 
$\nicefrac{\stateSetDisjunct}{R}$ by means of the canonical mappings
$$
\xymatrix{
\stateSet \ar[r] \ar@{-->}[rd]          & \stateSetDisjunct \ar[d] & \stateSet' \ar[l] \ar@{-->}[ld]      \\
 & \nicefrac{\stateSetDisjunct}{R}  & 
}
$$
With this diagram in mind we may define
$$S'\cap_R S = \{[x] \in \nicefrac{\stateSetDisjunct}{R} | \exists s\in S:s\in [x] \wedge \exists s' \in S':s' \in [x]\}$$
$$S'\cup_R S = \{[x] \in \nicefrac{\stateSetDisjunct}{R} | \exists s\in S:s\in [x] \vee \exists s' \in S':s' \in [x]\}$$
Using these canonical mappings, similar operations can be defined on the sets of transitions. Clearly $\transitionRelation \subseteq \stateSet \times Act \times \Disc{\stateSet}$ and
$\transitionRelationB \subseteq \stateSet' \times Act \times \Disc{\stateSet'}$. 
The set operations are performed in the set 
$\nicefrac{\stateSetDisjunct}{R} \times Act \times \Disc{\nicefrac{\stateSetDisjunct}{R}}$ by means of the corresponding canonical mappings.
$$
\xymatrix{
\stateSet \times Act \times \Disc{\stateSet} \ar[r] \ar@{-->}[rd]          & \stateSetDisjunct \times Act \times \Disc{\stateSetDisjunct} \ar[d] & \stateSet' \times Act \times \Disc{\stateSet'} \ar[l] \ar@{-->}[ld]      \\
 & \nicefrac{\stateSetDisjunct}{R} \times Act \times \Disc{\nicefrac{\stateSetDisjunct}{R}}  & 
}
$$
We thus may define (using the abbreviation $\mathfrak{T}=\nicefrac{\stateSetDisjunct}{R} \times Act \times \Disc{\nicefrac{\stateSetDisjunct}{R}}$):
$$\transitionRelation'\cap_R \transitionRelation = \{([x],a,[\mu]) \in \mathfrak{T} | (\exists (s,a,\gamma) \in \transitionRelation: s\in [x] \wedge \gamma\in [\mu]) \wedge (\exists (s',a,\gamma') \in \transitionRelation': s'\in [x] \wedge \gamma'\in [\mu])\}$$
$$\transitionRelation'\cup_R \transitionRelation = \{([x],a,[\mu]) \in \mathfrak{T} | (\exists (s,a,\gamma) \in \transitionRelation: s\in [x] \wedge \gamma\in [\mu]) \vee (\exists (s',a,\gamma') \in \transitionRelation': s'\in [x] \wedge \gamma'\in [\mu])\}$$
%
\ms{Ist die Bedeutung von $[\mu]$ in den beiden vorigen Zeilen klar?}
\js{Koennte man noch definieren, ich denke, es ist genug Platz}
Finally we note that $\startState$ has to be mapped to $\nicefrac{\stateSetDisjunct}{R}$.
Summing up, we define the intersection quotient wrt.~$R$ as
$$P \cap P':=(S'\cap_R S, Act, \transitionRelation \cap_R \transitionRelation', [\startState]_R)$$
and, similarly, the union quotient wrt.~$R$ as 
$$P \cup P':=(S'\cup_R S, Act, \transitionRelation \cup_R \transitionRelation', [\startState]_R)$$
With the same mappings one may define $P \subseteq P'$ if and only if $S \subseteq S'$
 (in $\nicefrac{\stateSetDisjunct}{R}$) and $\transitionRelation \subseteq \transitionRelationB$
in $\nicefrac{\stateSetDisjunct}{R} \times Act \times \Disc{\nicefrac{\stateSetDisjunct}{R}}$.
\end{definition}

As we target on bisimilar automata in this paper, we would like to stress that only the reachable fraction of automata is relevant.
Our quotient definition and the above set operations are defined for the general case, where unreachable states may occur, but of course
they apply also to the case where only reachable states are present.

\begin{lemma}
\label{lem:isocup}
Let $R\in \{\sim, \approx\}$ and
$P=\aPA$, $P'=\anotherPA$ be bisimilar quotients of PA wrt.~$R$.
It holds that $P \cap P' \someequiv P' \cap P$ and $P \cup P' \someequiv P' \cup P$.
\end{lemma}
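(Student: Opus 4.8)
The plan is to exhibit one explicit isomorphism and then simply trace it through the definitions; the only genuine content is that disjoint union is commutative up to isomorphism and that ``$\wedge$'' and ``$\vee$'' are symmetric connectives, everything else being bookkeeping. First I would fix the canonical swap bijection $\theta : \stateSet \disjunctunion \stateSet' \to \stateSet' \disjunctunion \stateSet$ given on representatives by $\theta(x,1)=(x,2)$ and $\theta(x',2)=(x',1)$; this is precisely the set-isomorphism witnessing the commutativity $S \disjunctunion S' = S' \disjunctunion S$ recorded just after Definition~1. Transporting the fixed relation $R \subseteq \stateSetDisjunct \times \stateSetDisjunct$ along $\theta$ gives a relation $R^\theta$ on $\stateSet' \disjunctunion \stateSet$, and $[x]_R \mapsto [\theta(x)]_{R^\theta}$ is then a bijection $\bar\theta : \nicefrac{\stateSetDisjunct}{R} \to \nicefrac{(\stateSet'\disjunctunion\stateSet)}{R^\theta}$. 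Since $\theta$ is a set-isomorphism, $R^\theta$ is again a bisimulation of the same type ($\sim$ or $\approx$), and $\bar\theta$ is exactly the flavour of set-isomorphism appearing in the definition of $\someequiv$: it pushes forward distributions by $\mu \mapsto \mu\circ\bar\theta^{-1}$ and transitions by $([x],a,[\mu]) \mapsto (\bar\theta[x],a,[\mu]\circ\bar\theta^{-1})$ as in the isomorphism definition.

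Next I would verify that $\bar\theta$ carries the components of $P \cap P'$ onto those of $P' \cap P$. Unfolding Definition~\ref{def:setoperations}, a class $[x]$ lies in $S' \cap_R S$ iff it contains a representative originating from $\stateSet$ \emph{and} one originating from $\stateSet'$; applying $\theta$ merely interchanges which component those representatives sit in, so, because ``$\wedge$'' is symmetric, the image is exactly the membership condition defining the state set $S \cap_{R^\theta} S'$ of $P' \cap P$. Replacing ``$\wedge$'' by ``$\vee$'' throughout gives the analogous statement for $S'\cup_R S$ versus $S\cup_{R^\theta}S'$. The same unfolding on $\mathfrak{T}=\nicefrac{\stateSetDisjunct}{R}\times Act\times\Disc{\nicefrac{\stateSetDisjunct}{R}}$ shows that $\bar\theta$, acting on triples, sends $\transitionRelation'\cap_R\transitionRelation$ to $\transitionRelation\cap_{R^\theta}\transitionRelation'$ and $\transitionRelation'\cup_R\transitionRelation$ to $\transitionRelation\cup_{R^\theta}\transitionRelation'$, again only swapping the roles of the two existential clauses. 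Finally, since $\startState\,R\,\startState'$, the class $[\startState]_R$ is the common initial class and $\bar\theta$ maps it to $[\startState]_{R^\theta}$, so initial states correspond. Hence $\bar\theta$ is an automaton isomorphism $P\cap P' \to P'\cap P$, and likewise for $\cup$, which gives both claims.

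The step I expect to need the most care — indeed the only place where something could slip — is the implicit identification of the ``$R$'' named in ``$P\cap P'$'' with the one named in ``$P'\cap P$'': strictly the former is a subset of $\stateSet\disjunctunion\stateSet'$ and the latter of $\stateSet'\disjunctunion\stateSet$. I would dispose of this with a single explicit remark that, because disjoint union (and therefore this whole construction) is defined only up to the canonical swap $\theta$, the relation meant in $P'\cap P$ is \emph{by definition} $R^\theta$; after that observation the argument above is the routine unfolding just sketched, with no additional hypotheses required.
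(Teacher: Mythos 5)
Your proposal is correct and is essentially an expanded version of the paper's own (one-line) proof, which simply says the claim is clear by identifying the two disjoint sums according to $R$; your swap isomorphism $\theta$ and the observation that the defining conditions of $\cap_R$ and $\cup_R$ are symmetric make that identification explicit. No gap — you have just written out the bookkeeping the paper leaves implicit.
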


\begin{proof}
Clear by identifying the different direct sums according to $R$
\end{proof}

As it will turn out that $\tau$-loops leading back to the same state
(as part of a distribution) can disturb the lattice property in the
case of weak bisimulation, we give the following definition.

\begin{definition}[Rescaled Automata]
\label{def:rescale}
An automaton $P=\aPA$ is called \emph{rescaled}, if for all $(s,\tau,\mu)\in \transitionRelation$
it holds either that $\mu(s)=0$ or $\mu(s)=1$.
\end{definition}
\ms{Folgender Satz ist neu (soll JS-Kommentar zusammenfassen):}
Note that the rescaling as defined in Def.~\ref{def:rescale}
is only one possibility (incidently the one leading to the smallest transition
fanout), but different rescalings, where each $\tau$ transition would
loop back to its source state with fixed probability $0\leq p < 1$ (or probability $1$ for ``loops'') 
would also work.
\js{Achtung: Die Lattice-Geschichte laeuft fuer beliebige Rescalings. Wenn Du fuer alle $\tau$-Transitionen in Deinen PA-Mengen abmachst, 
dass immer mit W-keit $p$ zurueck zum Ursprung
gegangen wird, klappt auch das Schneiden! Ich habe halt p=0 (rescaled) gewaehlt. Beliebige andere $p\neq 1$ funktionieren aber auch.
Nur \emph{unterschiedliche} Rescalings zu schneiden funktioniert nicht. Merke: Lattices sind immer moeglich, sobald das Rescaling-$p$ fest gewaehlt ist!
Die rescaled Automaten mit $p=0$ zeichnen sich halt unter all diesen Moeglichkeiten dadurch aus, dass sie minimalen Fanout haben.}

\begin{remark}
Let $P=\aPA$ be a PA. We split its set of transitions into the sets
$T_{\neg\tau}:=\{(s,a,\mu) \in T | a\neq \tau \}$, $T_{\Delta}=\{(s,\tau,\mu) \in T | \mu=\Delta_s \}$ and
$T_{\neg\Delta}=\{(s,\tau,\mu) \in T | \mu \neq \Delta_s \}$.
Now we define the function $res: T_{\neg\Delta} \rightarrow S\times Act \times Dist(S)$ by
$res(s,\tau,\mu):=(s,\tau,\frac{1}{1-\nu(s)}(\nu-s))$
which is well-defined, as we always have $\nu(s)\neq 1$. With 
$$T':=T_{\neg\tau} \cup T_{\Delta} \cup res(T_{\neg\Delta})$$
we may define the rescaled automaton $P^{res}:=(S,Act,T',s_0)$.
When using 
randomised 
schedulers, it is a basic fact, that $P\approx P^{res}$.
\ms{Ist das wirklich so ein basic fact?}
\js{nun ja. Aus der reskalierten Transition bekommt man ohne Probleme durch Hinzunahme des Schedulers, der in $s$ bleibt, zum nicht reskalierten Fall.
Umgekehrt erreicht man aus dem nicht-reskalierten Fall durch den DD-Scheduler, der aus $s$ immer die nicht-reskalierte Transition nimmt, sicher den Fall,
bei dem $s$ mit W-keit 0 erreicht wird, also den reskalierten Fall.}
\end{remark}

\subsection{Sets of quotients}

\begin{definition}[Sets of quotients]
\label{def:SetsOfQuotients}
Let P be an automaton and $\mathcal{PA}$ be the set of all PA. Define the set
$$\mathfrak{Q}_\sim (P):=\{ A \in \mathcal{PA} | A\text{ quotient wrt.~}\sim, A\sim P \}$$
and the set 
$$\mathfrak{Q}_\approx (P):=\{ A \in \mathcal{PA} | A\text{ quotient wrt.~}\approx, A\approx P, A \text{ rescaled} \}$$
The reachable fractions can also be considered:
$$\mathfrak{Q}^*_\sim (P):=\{ A \in \mathcal{PA} | A\text{ quotient wrt.~}\sim, A\sim P, A=r(A) \}$$
and the set 
$$\mathfrak{Q}^*_\approx (P):=\{ A \in \mathcal{PA} | A\text{ quotient wrt.~}\approx, A\approx P, A \text{ rescaled}, A=r(A) \}$$
\end{definition}

Of course, the automata in the set $\mathfrak{Q}^*_\sim (P)$ (in the set $\mathfrak{Q}^*_\approx (P)$) 
all have the same number of states.

\begin{lemma}
It holds that $\mathfrak{Q}^*_\sim (P) \subseteq \mathfrak{Q}_\sim (P)$ and $\mathfrak{Q}^*_\approx (P) \subseteq \mathfrak{Q}_\approx (P)$.
\end{lemma}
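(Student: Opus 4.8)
The plan is to prove both inclusions directly from Def.~\ref{def:SetsOfQuotients}, observing that in each case the starred family is carved out of the unstarred one by imposing exactly one additional condition, namely $A = r(A)$ (``every state of $A$ is reachable''). So the inclusions are a matter of dropping a conjunct from a set-builder predicate.

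First I would take an arbitrary $A \in \mathfrak{Q}^*_\sim(P)$. By definition this means $A \in \mathcal{PA}$, $A$ is a quotient wrt.~$\sim$, $A \sim P$, and in addition $A = r(A)$. Discarding the last conjunct, $A$ still satisfies $A \in \mathcal{PA}$, $A$ is a quotient wrt.~$\sim$, and $A \sim P$, which is precisely the membership condition defining $\mathfrak{Q}_\sim(P)$; hence $A \in \mathfrak{Q}_\sim(P)$. As $A$ was arbitrary, $\mathfrak{Q}^*_\sim(P) \subseteq \mathfrak{Q}_\sim(P)$.

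For the weak case the argument is identical up to notation: an element $A \in \mathfrak{Q}^*_\approx(P)$ is by definition an $A \in \mathcal{PA}$ that is a quotient wrt.~$\approx$, satisfies $A \approx P$, is rescaled, and additionally satisfies $A = r(A)$. The first four clauses are exactly the membership condition for $\mathfrak{Q}_\approx(P)$, so $A \in \mathfrak{Q}_\approx(P)$, and the inclusion follows. The only point worth a remark is that the extra clause $A = r(A)$ is meaningful for every PA, since $r(\cdot)$ is a total operation on PA by the definition of the reachable fraction; this ensures $\mathfrak{Q}^*_\sim(P)$ and $\mathfrak{Q}^*_\approx(P)$ are well defined, but plays no further role in the inclusion.

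I do not expect a genuine obstacle here: the content of the lemma is entirely the observation that the starred sets are defined by strictly stronger conjunctions of predicates than the unstarred ones, so nothing beyond unfolding the definitions is needed. (The substantive converse-style statement — that every quotient in $\mathfrak{Q}_\sim(P)$ has a reachable representative in $\mathfrak{Q}^*_\sim(P)$, e.g.\ via $A \mapsto r(A)$, using that $r(A) \sim A$ — is a different claim and is not what is being asserted.)
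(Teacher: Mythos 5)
Your proof is correct and matches the paper's (implicit) reasoning: the paper states this lemma without proof precisely because, as you observe, the starred sets are defined by the same conjunction of conditions plus the extra clause $A = r(A)$, so the inclusions follow by dropping that conjunct. Nothing more is needed.
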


\begin{remark}
As usual for equivalence relations, we may consider quotients
wrt.~isomorphism, i.e.~$\nicefrac{\mathfrak{Q}_\sim (P)}{\someequiv}$ 
and $\nicefrac{\mathfrak{Q}_\approx (P)}{\someequiv}$.
Without loss of generality we may identify the states of
every automaton in these quotients by a subset of the natural numbers 
$\mathbb{N}$ (using $1$ for the initial states).
For the sets $\mathfrak{Q}_\sim (P)$ and $\mathfrak{Q}_\approx (P)$ such an enumeration 
cannot be performed, as there are uncountably many unreachable parts (this is shown in detail by Lemma \ref{lemma:uncountable}).
\end{remark}

For the rest of the paper we now assume that the states in $\nicefrac{\mathfrak{Q}^*_\sim (P)}{\someequiv}$ 
and $\nicefrac{\mathfrak{Q}^*_\approx (P)}{\someequiv}$ are consecutive natural numbers,
starting with 1 for the initial state.


The rest of this section is devoted to show that for compact $P$ there are well-defined operations
$$\cap: \nicefrac{\mathfrak{Q}_\sim (P)}{\someequiv} \times \nicefrac{\mathfrak{Q}_\sim (P)}{\someequiv} \rightarrow \nicefrac{\mathfrak{Q}_\sim (P)}{\someequiv}$$
$$\cap: \nicefrac{\mathfrak{Q}_\approx (P)}{\someequiv} \times \nicefrac{\mathfrak{Q}_\approx (P)}{\someequiv} \rightarrow \nicefrac{\mathfrak{Q}_\approx (P)}{\someequiv}$$
Similar operations exist for $\cup$, but their existence and well-definedness is clear.
Note that it is a priori \emph{not} clear that the result is again in $ \nicefrac{\mathfrak{Q}_\sim (P)}{\someequiv}$ (or $\nicefrac{\mathfrak{Q}_\approx (P)}{\someequiv}$, respectively),
with other words that the results are again bisimilar to $P$. This will be shown in the following.



\begin{remark}[Counterexample]
\label{rem:counterexample}
We show that there is in general no well-defined operation $\cap: \nicefrac{\mathfrak{Q}_\sim (P)}{\someequiv} \times \nicefrac{\mathfrak{Q}_\sim (P)}{\someequiv} \rightarrow \nicefrac{\mathfrak{Q}_\sim (P)}{\someequiv}$,
that means the compactness of $P$ is crucial. We use the (non-compact) automata
$$P:=(\{1,2,3\}, \{\tau , a, b \}, \{(1,\tau,\frac{1}{n}\Delta_2 \oplus (1-\frac{1}{n})\Delta_3)\}_{n\in \mathbb{N}, n\geq 2} \cup \{(2,a,\Delta_2),(3,b,\Delta_3)\},1 )$$
and
$$P':=(\{1,2,3\}, \{\tau , a, b \}, \{(1,\tau,\frac{\mathrm e}{n}\Delta_2 \oplus (1-\frac{\mathrm e}{n})\Delta_3)\}_{n\in \mathbb{N}, n\geq 6}$$ 
                 $$ \cup \{(1,\tau,\frac{1}{2}\Delta_2 \oplus \frac{1}{2}\Delta_3),(2,a,\Delta_2),(3,b,\Delta_3)\},1 ).$$
It is easy to see that $P\sim P'$ and $P,P'\in \nicefrac{\mathfrak{Q}_\sim (P)}{\someequiv}$.
It is clear that the intersection of both automata is only 
$$P^\cap:=(\{1,2,3\}, \{\tau , a, b \},\{(1,\tau,\frac{1}{2}\Delta_2 \oplus \frac{1}{2}\Delta_3),(2,a,\Delta_2),(3,b,\Delta_3)\},1 ),$$ as $\mathrm e$ is \emph{not} a rational number.
So the result of the intersection is clearly not bisimilar, i.e.~not in $\nicefrac{\mathfrak{Q}_\sim (P)}{\someequiv}$.
The reason for this is, that both automata allow for a limiting distribution $(1,\tau, \Delta_3)$, but both don't reach this distribution.
As the ways how this limit is reached (i.e.~the sequences) are different, the intersection no longer leads to a bisimilar result.
Note that there are many other examples (e.g.~every irrational root may be taken instead of $\mathrm{e}$).
\end{remark}

Thus we have shown:
\begin{lemma}
In general
$(\nicefrac{\mathfrak{Q}_\sim (P)}{\someequiv},\subseteq)$ and $(\nicefrac{\mathfrak{Q}^*_\sim (P)}{\someequiv},\subseteq)$ are \emph{not} lattices.
\end{lemma}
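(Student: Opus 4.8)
The plan is to obtain the statement directly from the counterexample of Remark~\ref{rem:counterexample}. There we produced automata $P$ and $P'$ with $P\sim P'$ and $[P],[P']\in\nicefrac{\mathfrak{Q}_\sim(P)}{\someequiv}$; since all three states of $P$ and $P'$ are reachable and the bisimulation $\sim$ is the identity on each of them, in fact $[P],[P']\in\nicefrac{\mathfrak{Q}^*_\sim(P)}{\someequiv}$ as well, so both posets can be treated at once. The goal is to show that the pair $\{[P],[P']\}$ has no lower bound \emph{at all} in $(\nicefrac{\mathfrak{Q}_\sim(P)}{\someequiv},\subseteq)$, which already precludes a meet and hence the lattice property.

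First I would pin down the canonical state identification. Every $\sim$-quotient bisimilar to $P$ has exactly three classes: the class of the initial state, the class of the unique state able to perform $a$, and the class of the unique state able to perform $b$. These are pairwise distinguished already by the available external actions, so across all elements of $\nicefrac{\mathfrak{Q}_\sim(P)}{\someequiv}$ the three states may be identified canonically (as the $1,2,3$ of the counterexample). Under this identification the relation $\subseteq$ from Def.~\ref{def:setoperations} becomes literal inclusion of transition sets, and therefore any common lower bound $Q$ of $[P]$ and $[P']$ satisfies $T_Q\subseteq T_P\cap T_{P'}=T_{P^\cap}$, i.e.\ $Q\subseteq P^\cap$, where $P^\cap$ is the automaton computed in Remark~\ref{rem:counterexample}.

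Next I would argue that no sub-automaton $Q$ of $P^\cap$ can have $Q\sim P$. For every $n\geq 2$ the state $1$ of $P$ has the transition $1\stackrel{\tau}{\rightarrow}\frac1n\Delta_2\oplus(1-\frac1n)\Delta_3$, so strong (combined) bisimilarity forces the class of $1$ in $Q$ to match each of these by a combined $\tau$-transition landing, class-wise, on $\frac1n\Delta_{[2]}\oplus(1-\frac1n)\Delta_{[3]}$. But the only $\tau$-transition of $P^\cap$ out of state $1$ is $(1,\tau,\frac12\Delta_2\oplus\frac12\Delta_3)$, and the set of combined transitions generated by a single transition is just the singleton containing that transition; hence $Q$ can realise only $1\stackrel{\tau}{\rightarrow}_C\frac12\Delta_{[2]}\oplus\frac12\Delta_{[3]}$ and fails to match, e.g., $1\stackrel{\tau}{\rightarrow}\frac13\Delta_2\oplus\frac23\Delta_3$. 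Thus $Q\not\sim P$, so $Q\notin\nicefrac{\mathfrak{Q}_\sim(P)}{\someequiv}$. Consequently $\{[P],[P']\}$ has no lower bound in the poset, which therefore cannot be a lattice; the argument applies verbatim to $\nicefrac{\mathfrak{Q}^*_\sim(P)}{\someequiv}$.

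The main obstacle I anticipate is one of care rather than depth: making precise that $\subseteq$ across three distinct automata really is the canonical transition inclusion (so that ``lower bound'' genuinely forces $T_Q\subseteq T_{P^\cap}$), and making sure that the argument rules out \emph{every} lower bound and not merely the obvious candidate $P^\cap$ itself. Once the canonical identification of the three state classes is set up, both points are routine.
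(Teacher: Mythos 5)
Your proposal is correct and rests on the same counterexample the paper uses: the paper derives the lemma directly from Remark~\ref{rem:counterexample} (``Thus we have shown:''), i.e.\ from the observation that $P\cap P'=P^\cap$ is not bisimilar to $P$. You go one step further than the paper does, and usefully so: the paper's one-line derivation implicitly equates ``the intersection is not in the set'' with ``there is no meet'', whereas a meet in $(\nicefrac{\mathfrak{Q}_\sim (P)}{\someequiv},\subseteq)$ need not a priori coincide with the set-theoretic intersection. You close this gap by showing that \emph{every} common lower bound $Q$ satisfies $T_Q\subseteq T_{P^\cap}$ under the canonical identification of the three behaviourally distinct classes, and that no such $Q$ can match $1\stackrel{\tau}{\rightarrow}\frac{1}{3}\Delta_2\oplus\frac{2}{3}\Delta_3$ by a combined transition built from the single available $\tau$-transition $(1,\tau,\frac{1}{2}\Delta_2\oplus\frac{1}{2}\Delta_3)$; hence the pair $\{[P],[P']\}$ has no lower bound at all in the poset. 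The observation that all three states of $P$ and $P'$ are reachable, so the same argument covers $\nicefrac{\mathfrak{Q}^*_\sim (P)}{\someequiv}$, is also correct. In short: same counterexample, but your write-up is the more complete proof of the stated lemma.
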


\section{Lattice structures}
\label{sec:structures}
For the rest of this paper we consider a subset of PA where $(\nicefrac{\mathfrak{Q}_R (P)}{\someequiv},\subseteq)$ and $(\nicefrac{\mathfrak{Q}^*_R (P)}{\someequiv},\subseteq)$,
$R\in\{\sim,\approx\}$ 
are lattices.

%
\begin{lemma}[compact sets of quotients]
Let $R\in \{\sim, \approx\}$.
The PAs contained in $\nicefrac{\mathfrak{Q}_R (P)}{\someequiv}$ are either
all compact, or all non-compact. Thus
the property of compactness is well-defined for $\nicefrac{\mathfrak{Q}_R (P)}{\someequiv}$.
\end{lemma}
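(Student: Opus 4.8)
The plan is to show that compactness is an invariant of the equivalence class $\nicefrac{\mathfrak{Q}_R(P)}{\someequiv}$ by showing two things: first, that compactness is preserved under isomorphism of automata, and second, that any two automata in $\mathfrak{Q}_R(P)$ — being bisimilar quotients with respect to the \emph{same} relation $R$ — have ``the same'' reachable convex sets $S_R(s,a)$ up to an isometry, so that compactness of one forces compactness of the other. Since every automaton in the $\someequiv$-quotient is, by definition, a quotient with respect to $R$ that is $R$-equivalent to $P$, it suffices to prove the claim for two such quotients $A$, $B$.

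First I would handle isomorphism invariance. If $\iota: A \to B$ is a set-isomorphism with $B = \iota(A)$, then the push-forward $\mu \mapsto \mu \circ \iota^{-1}$ is a bijection between $Dist(S_A)$ and $Dist(S_B)$, and it is an isometry for the metric $d$ of Def.~\ref{def:metric}, because $\sup_{A'\subseteq S_A}|\mu_1(A') - \mu_2(A')| = \sup_{B'\subseteq S_B}|(\mu_1\circ\iota^{-1})(B') - (\mu_2\circ\iota^{-1})(B')|$ (the supremum ranges over sets $B' = \iota(A')$, and $\iota$ is a bijection on subsets). This bijection maps $S_R(s,a)$ for $A$ onto $S_R(\iota(s), a)$ for $B$, so one set is compact iff the other is; hence $A$ compact $\Leftrightarrow$ $B$ compact. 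This deals with the ``up to $\someequiv$'' part.

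The main step is the bisimilar case. Let $A$ and $B$ be quotients with respect to $R$ with $A \mathrel{R} B$ (their initial states related by $R$ on the disjoint union). Since both are quotients, their states are in bijection with $R$-equivalence classes; the bisimulation $R$ on $S_A \disjunctunion S_B$ identifies each class of $A$ with a unique class of $B$, giving a bijection $\phi$ between the state sets that respects $R$. The key claim is that for each state $s$ of $A$ and its image $\phi(s)$ in $B$, and each $a$, the lifted bijection on distributions restricts to a bijection $S_R(s,a) \to S_R(\phi(s),a)$, and moreover this bijection is an isometry for $d$ viewed on the quotient level (i.e.\ $d$ computed on equivalence classes, which is how $S_R(s,a)$ sits in the metric space by the Corollary after Lemma~\ref{lem:metric}). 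This follows from the definition of strong/weak bisimulation: $s \stackrel{a}{\rightarrow}_C \mu$ (resp.\ $\stackrel{a}{\Rightarrow}_C$) in $A$ forces a combined transition $\phi(s) \stackrel{a}{\rightarrow}_C \mu'$ (resp.\ $\stackrel{a}{\Rightarrow}_C$) in $B$ with $\mu(C) = \mu'(C)$ for all classes $C$, and vice versa; since $A$, $B$ are quotients, distributions \emph{are} determined by their values on classes, so the correspondence $\mu \leftrightarrow \mu'$ is a well-defined bijection, and it is a $d$-isometry because $d$ on the quotients is exactly the sup over class-measures. A compact set maps under an isometric bijection to a compact set, so $s$ is $a$-compact in $A$ iff $\phi(s)$ is $a$-compact in $B$; ranging over all $s$ and $a$ gives that $A$ is compact iff $B$ is.

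The main obstacle I anticipate is the bookkeeping around combined transitions in the weak case: one must verify that the set $\{\mu : \phi(s) \stackrel{a}{\Rightarrow}_C \mu\}$, pushed to the quotient, is precisely the image of $\{\mu : s \stackrel{a}{\Rightarrow}_C \mu\}$ — i.e.\ that no ``extra'' weak combined transitions appear in $B$ that have no counterpart in $A$. This is exactly what the bisimulation conditions guarantee in both directions (using that $R$ is symmetric), but it requires care because weak combined transitions involve schedulers and infinite branching; here the hypothesis that we work with quotients (so that distributions are faithfully represented by their class-values) and the fact that $R$ is the \emph{same} relation for both automata are what make the argument go through. Once the isometric bijection $S_R(s,a) \to S_R(\phi(s),a)$ is in hand, the rest is the elementary topological fact that compactness transfers across homeomorphisms.
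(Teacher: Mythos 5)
Your proposal is correct and follows essentially the same route as the paper, whose entire proof is the one-line observation that the sets $S_R(s,a)$ are unique up to isomorphism across $\nicefrac{\mathfrak{Q}_R(P)}{\someequiv}$, so compactness transfers. You have simply unfolded that observation into its two constituents (isometry of the push-forward under $\someequiv$, and the isometric identification of the $S_R(s,a)$ induced by the bisimulation between quotients), which is a more careful write-up of the same argument.
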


\begin{proof}
Let $P$ be a PA, $R\in \{\sim, \approx\}$ and $P'\in \nicefrac{ \mathfrak{Q}_R (P)}{\someequiv}$.
$P'$ is compact if and only if $P$ is compact, as the sets $S_R(s,a)$ are unique up to isomorphism. 
\end{proof}

In the following we will only consider compact PA and therefore also compact sets of quotients.
%
%
%
In metric spaces, compactness is equivalent to sequence compactness.

\begin{lemma}
\label{lem:A}
$(\nicefrac{\mathfrak{Q}_\sim (P)}{\someequiv},\subseteq)$ is a poset. 
If $P$ is compact, it is even
a lattice with union (intersection) of automata as join (meet) operations. Intersections are also possible over arbitrary sets of bisimilar 
automata.
\end{lemma}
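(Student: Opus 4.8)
The plan is to verify the three claims separately. \textbf{Poset structure.} First I would check that $\subseteq$ on $\nicefrac{\mathfrak{Q}_\sim (P)}{\someequiv}$ is a partial order. Reflexivity is immediate. For antisymmetry, if $P_1\subseteq P_2$ and $P_2\subseteq P_1$ then the state sets (and transition sets) coincide inside $\nicefrac{\stateSetDisjunct}{R}$, which by the canonical embeddings means the identity on classes is an isomorphism, so $P_1\someequiv P_2$; note that passing to $\someequiv$-classes is exactly what makes antisymmetry hold, since otherwise only a renaming would be obtained. Transitivity follows because the canonical maps into the appropriate quotient are compatible, so containment composes. One subtlety to address is that the pairwise intersections/unions from Def.~\ref{def:setoperations} are defined via a disjoint union of \emph{two} automata, whereas the poset order $\subseteq$ must be read consistently; I would remark that because both $P_i$ are quotients wrt.\ $R$ and bisimilar to $P$, the relevant $S_R(s,a)$ sets are canonically identified (as in the proof of the ``compact sets of quotients'' lemma), so all these operations take place coherently inside one fixed structure.

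\textbf{Lattice structure (the main work).} Given compactness of $P$, I would show $P_1\cup P_2$ and $P_1\cap P_2$, as defined in Def.~\ref{def:setoperations}, are again in $\nicefrac{\mathfrak{Q}_\sim (P)}{\someequiv}$ and are the join and meet. The order-theoretic part is cheap: $P_1\cap P_2\subseteq P_i\subseteq P_1\cup P_2$ by construction, and if $Q\subseteq P_i$ for both $i$ then $Q\subseteq P_1\cap P_2$ (and dually for the union), directly from the set-theoretic definitions of $\cap_R,\cup_R$ on states and transitions. The real content is that the result is still bisimilar to $P$, i.e.\ lies in the set at all. For the union this is easy and is exactly why the excerpt says ``their existence and well-definedness is clear'': adding transitions that are already combined transitions available in $P_i$ (hence in $P$) cannot break the bisimulation, since the identity relation on $\nicefrac{\stateSetDisjunct}{R}$ still matches every transition of $P_1\cup P_2$ by a combined transition. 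For the intersection, the point is that each $S_\sim(s,a)$ is, after the embedding $f$ into $(l^\infty,d)$ described before Def.~\ref{def:extremal_points}, a compact convex set, and the transitions present in $P_i$ from class $[s]$ under $a$ form a convex subset whose closure is all of $S_\sim(s,a)$ — equivalently, a dense convex subset. The crux is: the intersection of two dense convex subsets of a compact convex metric space is again dense, \emph{provided} the subsets are closed, and here compactness of $P$ forces these convex sets of reached distributions to be closed (this is precisely what fails in Remark~\ref{rem:counterexample}, where the limit point $(1,\tau,\Delta_3)$ is missing). So I would argue: by $a$-compactness wrt.\ $\sim$, the set of $\mu$ with $[s]\stackrel{a}{\rightarrow}_C\mu$ in $P_i$ equals the full compact set $S_\sim(s,a)$ up to the bisimulation quotient; intersecting two such full sets gives the full set again; hence $P_1\cap P_2$ can simulate every transition of $P$, and conversely every transition of $P_1\cap P_2$ is a transition of $P$, so $P_1\cap P_2\sim P$ via the identity relation.

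\textbf{Arbitrary intersections.} For a family $\{P_j\}_{j\in J}$ of bisimilar compact quotients, I would form $\bigcap_j P_j$ with state set $\{[x]\in\nicefrac{\stateSetDisjunct}{R}\mid \forall j\,\exists s_j\in S_j: s_j\in[x]\}$ and transitions likewise those present in \emph{every} $P_j$. Since for each $(s,a)$ every $P_j$ realises the whole compact convex set $S_\sim(s,a)$ (modulo $\sim$), the intersection over $j$ still realises $S_\sim(s,a)$; hence $\bigcap_j P_j\sim P$ and it is the greatest lower bound by the same set-inclusion argument as before. The key obstacle is the claim that in the compact case the convex set of reachable (combined) distributions from a state under an action is \emph{closed}, so that ``intersection of dense subsets is dense'' actually goes through; once that is pinned down — using that $d$-compactness of $S_\sim(s,a)$ together with convexity of the scheduler-generated set forces that set to be all of $S_\sim(s,a)$, not merely dense in it — the rest is bookkeeping with the canonical maps of Def.~\ref{def:setoperations}. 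I would also note explicitly that compactness is preserved: the $S_\sim(s,a)$ sets are unchanged, so $P_1\cap P_2$ and $P_1\cup P_2$ remain compact, keeping us inside the class under consideration.
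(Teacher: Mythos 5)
Your poset and union arguments are fine, and your order-theoretic verification that $\cap$ and $\cup$ are meet and join is the same bookkeeping the paper leaves implicit. The gap is in the intersection step. The operation $\transitionRelation'\cap_R \transitionRelation$ of Def.~\ref{def:setoperations} intersects the \emph{literal} transition relations, so the combined transitions available from $[s]$ under $a$ in $P_1\cap P_2$ are the convex combinations of $T^1_a\cap T^2_a$, where $T^i_a$ denotes the set of distributions $\mu$ with $([s],a,\mu)\in T_i$. Your argument only establishes that the sets of \emph{combined} transitions of $P_1$ and of $P_2$ each equal the full compact set $S_\sim(s,a)$, and that these two identical sets intersect in $S_\sim(s,a)$ --- true but irrelevant, since in general $CHull(T^1_a\cap T^2_a)$ is a proper subset of $CHull(T^1_a)\cap CHull(T^2_a)$. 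This is exactly what happens in Remark~\ref{rem:counterexample}: there both literal transition sets have the same convex hull $(0,\nicefrac{1}{2}]$ (identifying $c\Delta_2\oplus(1-c)\Delta_3$ with $c$), yet their literal intersection is the single point $\nicefrac{1}{2}$. Your intermediate claim that the transitions present in $P_i$ form ``a convex subset whose closure is all of $S_\sim(s,a)$'' is also not correct: in the bounded example of Sec.~\ref{sec:examples} the minimal automaton has only the two transitions to $\Delta_2$ and $\Delta_3$ from state $1$, a set that is neither convex nor dense in $S_\sim(1,\tau)$.

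The missing idea --- and the one the paper actually uses --- is the Krein-Milman theorem (Theorem~\ref{th:krein}) combined with the observation that an extreme point of $S_\sim(s,a)$ (Def.~\ref{def:extremal_points}) cannot be produced as a nontrivial convex combination of other reachable distributions and therefore must occur as a literal transition in \emph{every} bisimilar quotient; hence $\extremalp(S_\sim(s,a))\subseteq T^i_a$ for all $i$, and the extreme points survive any (even arbitrary) intersection. Compactness enters precisely through Krein-Milman, which guarantees that these surviving extreme points regenerate all of $S_\sim(s,a)$, so the intersection remains bisimilar; the same argument handles arbitrary families. Your appeal to closedness of the set of reached distributions correctly identifies why the counterexample fails, but it does not supply this mechanism, so as written the central claim of the lemma is asserted rather than proved.
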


\begin{proof}

By Krein-Milman Theorem (Theorem \ref{th:krein}) we know that for countable state spaces the sets $S_\sim(s,a)$ (seen as subsets of $l^\infty$)
have a unique set of extreme points $\extremalp(S_\sim(s,a))$. This set is included in every bisimilar automaton, therefore in every intersection.
This is the reason why the intersection of all automata in a set of bisimilar automata still leads to a bisimilar automaton.
Transitions leading to some distribution in $S_\sim(s,a)$ that is not extreme do not change the bisimilarity. This is the reason why the union
of two bisimilar automata leads to a bisimilar automaton.
Finally observe that the unreachable fraction of states and transitions does not play any role for
bisimilarity.
\end{proof}


In the strong case, a minimal set of transitions leading to extreme points can be chosen from the strong transitions emanating 
from $s$. In the weak case this does not have to be the case, as an extreme point might be reached transitively via some intermediate distributions.

Astonishingly, there's a similar statement also for weak bisimilarity, when only rescaled automata are considered.
The interesting point is that the arguments of the proof of Lemma~\ref{lem:A} do not apply one-to-one, as the extreme points 
can now also be weak transitions. Therefore it is (with the arguments of the proof for strong bisimulation) not clear 
that the intersection leads again to a weakly bisimilar PA.

\begin{lemma}
\label{lem:B}
$(\nicefrac{\mathfrak{Q}_\approx (P)}{\someequiv}, \subseteq)$ is a poset.
When $\nicefrac{\mathfrak{Q}_\approx (P)}{\someequiv}$ is compact,
it is even a lattice with union (intersection) of automata as join (meet) operations.
Intersections are also possible over arbitrary sets of bisimilar 
automata.
\end{lemma}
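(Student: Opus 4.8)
The plan is to mirror the proof of Lemma~\ref{lem:A} as closely as possible, isolating the one place where the strong argument breaks, and repairing it using the rescaling assumption together with compactness. First I would establish that $(\nicefrac{\mathfrak{Q}_\approx (P)}{\someequiv}, \subseteq)$ is a poset; this is formal, since $\subseteq$ on states and transitions (read inside $\nicefrac{\stateSetDisjunct}{R}$) is reflexive, transitive, and antisymmetric up to $\someequiv$, exactly as in the strong case, and does not use compactness at all. For the lattice part, the crucial geometric input is again Krein--Milman (Theorem~\ref{th:krein}): for each pair $(s,a)$ the set $S_\approx(s,a)\subseteq l^\infty$ is compact (by the compactness hypothesis on $\nicefrac{\mathfrak{Q}_\approx (P)}{\someequiv}$ and the fact that $f$ maps it homeomorphically onto a compact convex subset of $l^\infty$, as set up before Def.~\ref{def:extremal_points}) and convex, hence equals $CHull(\extremalp(S_\approx(s,a)))$. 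The set $\extremalp(S_\approx(s,a))$ of extreme points is an isomorphism invariant of the bisimilarity class, so it is contained in every quotient in the class; therefore any intersection still realises all extreme points, and since the full reachable convex set is the convex hull of these extreme points, the realised weak-transition set is unchanged and bisimilarity is preserved. Dually, adding transitions that reach non-extreme (hence convex-combination-reachable) distributions does not change the bisimilarity class, so the union stays in the class.

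The step I expect to be the main obstacle is precisely the one flagged in the text between Lemma~\ref{lem:A} and the statement: in the weak case an extreme point of $S_\approx(s,a)$ need not be realised by a single strong transition out of $s$, but only by a weak combined transition $s\stackrel{a}{\Rightarrow}_C\mu$ built from a whole tree of intermediate $\tau$-moves and an $a$-move. So ``the set of extreme points is included in every bisimilar automaton'' is not literally a statement about transitions but about weak transitions, and when we intersect two quotients we intersect their \emph{transition} relations, not their weak-transition relations. The repair I would carry out: show that for a \emph{rescaled} quotient, the weak transition witnessing an extreme point can be chosen canonically, so that the underlying finite set of transitions it uses is an isomorphism invariant and hence survives intersection. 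Concretely, I would argue that in a rescaled automaton every $\tau$-transition either is the Dirac self-loop $\Delta_s$ (which contributes nothing to a weak transition and may be ignored) or has $\mu(s)=0$; this removes the phenomenon that made Def.~\ref{def:rescale} necessary, namely $\tau$-loops that return to their source with intermediate probability and thereby allow a continuum of differently-scaled representations of the same weak transition (cf.\ the counterexample in Remark~\ref{rem:counterexample}, which exploited exactly a family of differently-scaled transitions approaching a limit). With rescaling fixed, the representations are rigid enough that ``the weak transition realising extreme point $e$'' determines a well-defined finite sub-automaton, common to all quotients in the class.

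Granting that, the remaining steps are routine and parallel to Lemma~\ref{lem:A}: (i) the intersection $P\cap P'$ of two quotients in the class still contains, for each $(s,a)$, all the canonical witnesses of $\extremalp(S_\approx(s,a))$, hence $S_\approx(s,a)$ is unchanged in $P\cap P'$ (its image is still $CHull(\extremalp(S_\approx(s,a)))$ by Krein--Milman), so $P\cap P'\approx P$ and it is again a rescaled quotient, i.e.\ lies in $\nicefrac{\mathfrak{Q}_\approx (P)}{\someequiv}$; (ii) by construction $P\cap P'\subseteq P$ and $P\cap P'\subseteq P'$, and any lower bound $Q\subseteq P, Q\subseteq P'$ has $Q\subseteq P\cap P'$ by the set-theoretic definition of $\cap_R$, so $P\cap P'$ is the meet; (iii) dually $P\cup P'$ is a rescaled quotient (adding transitions to non-extreme points is harmless) and is the join; (iv) the same argument with an arbitrary family $\{P_i\}$ in place of $\{P,P'\}$ gives that $\bigcap_i P_i$ is again in the class and is the infimum of the family, because the only thing the intersection can destroy is a transition to a non-extreme distribution, never a canonical witness of an extreme point. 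I would also note explicitly that the unreachable fraction of states and transitions is irrelevant to $\approx$, so working with $\mathfrak{Q}_\approx$ rather than $\mathfrak{Q}^*_\approx$ costs nothing. The one subtlety to handle with care in writing (i) is that weak combined transitions may use infinitely many strong transitions and split probability across infinitely many $a$-branches; here compactness of $\nicefrac{\mathfrak{Q}_\approx (P)}{\someequiv}$ is what guarantees that the relevant convex sets are closed, so that the intersection of the (closed) realised sets is again the full closed convex set and no extreme point is lost to a limiting process — this is exactly the role compactness played in ruling out the counterexample of Remark~\ref{rem:counterexample}.
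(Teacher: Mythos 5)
You have correctly located the crux — in the weak case the extreme points of $S_\approx(s,a)$ are realised by weak combined transitions rather than by single strong transitions, while $\cap_R$ intersects the \emph{strong} transition relations — but your repair of that step does not go through. You claim that for rescaled quotients ``the weak transition witnessing an extreme point can be chosen canonically, so that the underlying finite set of transitions it uses is an isomorphism invariant and hence survives intersection.'' This is asserted rather than proved, and it is not true: rescaling (Def.~\ref{def:rescale}) only eliminates the degree of freedom coming from $\tau$-transitions that return to their source with intermediate probability; it does not force two bisimilar quotients to use the same strong transitions to realise a given weak transition. One quotient may reach an extreme distribution $\mu$ by a single strong $\tau$-step while another reaches it only transitively through intermediate distributions, so the strong transitions underlying the witness need not lie in the intersection at all. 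Since the entire difficulty of the lemma is concentrated in exactly this claim, the proposal as written has a genuine gap. (A second, smaller issue: the witness of a weak transition need not be a \emph{finite} sub-automaton, since weak combined transitions may use infinitely many strong transitions.)

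The paper's proof takes a different route that avoids any canonicity claim. Fixing a transition $(s,\tau,\mu)$ that is not shared by both quotients, it invokes the construction of Lemma~12 of \cite{tacas:13} to replace it by a weak transition not using it, then iterates this replacement on every non-shared transition occurring in the resulting weak transition. This produces a sequence of first-step distributions $(\mu^{(i)})_{i\in\mathbb{N}}$; sequential compactness guarantees a limit point in $S_\approx(s,\tau)$, and if that limit point were itself a non-shared transition one would obtain a non-trivial cycle $\mu^\ast \Rightarrow_C \gamma^\ast \Rightarrow_C \mu^\ast$, rendering two distinct states bisimilar and contradicting the quotient property. Hence every non-shared transition is redundant and the intersection remains bisimilar. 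So compactness is used not (as in your sketch) to make $S_\approx(s,a)$ closed so that Krein--Milman applies, but to extract a limit of an iterative substitution process; and the quotient property, which plays no role in your argument, is what rules out the bad limit. To fix your proof you would need to replace the canonical-witness claim by an argument of this kind (or prove some genuine rigidity statement for rescaled quotients, which the paper does not attempt).
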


\begin{proof}
The meet operation for \emph{finite} PA is justified by Lemma 12 
in \cite{tacas:13} 
and the observation that the unreachable fraction of states and transitions does not play any role for
bisimilarity. The join operation is clear by definition of weak bisimilarity and the same observation.

For the infinite case assume for the rest of the proof that there are two quotients $P_1$ and $P_2$, $P_1\approx P_2$ 
and we identify the (countable) reachable state spaces $S_1$ and $S_2$ 
as in Def.~\ref{def:setoperations} by $\nicefrac{S^\circ}{R}$ (note that this induces a bijection). 


Now fix one state $s \in \nicefrac{S^\circ}{\approx}$.
 By assumption it must have two sets of emanating $a$-transitions -- one in $P_1$, one in $P_2$.
During the proof we use the notation $S^i_R(s,a)$, $i\in\{1,2\}$ to denote the sets of reachable distributions of automaton 1 and 2, respectively. 
We have to show that already the intersection of both sets is enough to generate $S_\approx(s,a)$. 
The main difference to the strong case is that we only know that
$S^1_\approx(s,a)=S^2_\approx(s,a)$ for all $a\in Act$, but this does not necessarily imply that also $S^1_\sim(s,a)=S^2_\sim(s,a)$.
According to Def.~\ref{def:SetsOfQuotients} both automata $P_1$ and $P_2$ are rescaled.
We consider the one-step-transitions $T^i_a=\{\mu \in Dist(\nicefrac{S^\circ}{R}) | (s,a,\mu)\in T_i \}$.
Assume further that the identical transitions (modulo bijection) already have been identified (in $\nicefrac{S^\circ}{R}\times Act\times Dist(\nicefrac{S^\circ}{R})$) and are denoted 
$T^\cap_a = T^1_a \cap T^2_a$ and let $T^{\neg\cap_i}_a=T_a^i \setminus T^\cap_a$.
We will show that the identical transitions are enough to cover all the behaviour, i.e.~all transitions in 
$T^{\neg\cap_i}_a$, $i\in \{1,2\}$, can be omitted.

We will first consider the case $a=\tau$.
We fix an arbitrary $\mu\in T^{\neg\cap_1}_\tau$ for which we show that it can be
generated by transitions from $T^{\cap}_\tau$.
It is clear that both sets $T^1_\tau$ and $T^2_\tau$
generate the set $S_\approx(s,\tau)$ (in the sense that all weak rescaled transitions must start with a combination of these transitions as first step).

With the construction from the proof of Lemma 12 
in \cite{tacas:13} we get a 
weak
transition\footnote{The proof in \cite{tacas:13} shows that a transition $s\stackrel{\tau}{\rightarrow}\mu$ 
-- there denoted $s\stackrel{\tau}{\rightarrow}\nu_s$ -- 
is redundant, as long as it is not in the intersection $T^{\cap}_\tau$, by constructing
the above mentioned transition. The only case where it is not redundant leads to a contradiction to the quotient property.}
in $T_1\setminus \{(s,\tau,\mu)\}$ such that $s\Rightarrow_C \mu$. 
\ms{ok, nach Studium des Tacas-Beweises kaufe ich das ab.}
Note that the construction in \cite{tacas:13} has been given for \emph{finite} state spaces, but all constructions
there are also possible in the countable compact case due to sequential compactness.
We split this transition in its first-step-probability $\mu^{(1)}$, such that $s \rightarrow_C \mu^{(1)}\Rightarrow_C \mu$.
We see that without loss of generality we may reduce the set $T_1$ to the set $T_1\setminus \{(s,\tau,\mu)\}$ without losing bisimilarity.
With this construction we proceed as follows: Pick now some other transition $(s,\tau,\mu')$ with $\mu'\in T^{\neg\cap_1}_\tau\setminus \{\mu\}$
that is used somewhere in the weak transition $s \rightarrow_C \mu^{(1)}\Rightarrow_C \mu$.
\ms{Ok, ich habe das zurueckgeaendert auf "weak transition".
``somewhere'' bedeutet also entweder im $\rightarrow_C$-Teil oder im $\Rightarrow_C$-Teil der schwachen Transition!
Wenn $(s,\tau,\mu')$ im $\rightarrow_C$-Teil verwendet wuerde,
bekommen wir unten ein $\mu^{(2)} \not= \mu^{(1)}$.
Im anderen Fall, wenn $(s,\tau,\mu')$ im $\Rightarrow_C$-Teil verwendet wuerde,
bekommen wir eine Schleife $s \Rightarrow_C s$, dann tritt u.g. Fall $\mu^{(1)}=\mu^{(2)}$ ein.
}
\js{Man muss ja \emph{alle} nicht-schnitt-Transitionen ersetzen, die irgendwo in der
besagten schwachen Transition vorkommen (Stichwort Loops zurueck zu s). Wenn man sich nur auf den Anfangsteil stuerzt dann ist nicht gesagt, dass nicht 
irgendwo spaeter doch noch eine nicht-schnitt-Transition verwendet wird. Deshalb auch unten die ``possibly'' Betrachtung: Dort wird gerade im ``hinteren''
Teil der schwachen Transition was getauscht.}
\ms{Ich glaub' das hab ich jetzt kapiert}
\js{Schaut gut aus! Ich habe mich gefragt, ob man noch genauer hinschreiben sollte, dass \emph{alle} entsprechenden starken Uebergaenge in einem Schritt durch 
die entsprechenden schwachen Uebergaenge ersetzt werden muessen. Nach einem Ersetzungsschritt darf die transition $(s,\tau,\mu)$ \emph{nirgends} mehr verwendet werden.}
Now, again by the construction in the proof of Lemma 12 in \cite{tacas:13}, substitute all strong $\tau$ transitions 
from $s$ to $\mu'$ 
by weak transitions in $T_1\setminus \{(s,\tau,\mu) ,(s,\tau,\mu')\}$ such that $s \rightarrow_C \mu^{(2)}\Rightarrow_C\mu^{(1)}\Rightarrow_C \mu$
(possibly $\mu^{(2)}=\mu^{(1)}$ when we substituted some transition not taken in the first step).
Continuing in the same way yields a sequence of combined one-step-transitions leading to the series
of distributions
$(\mu^{(i)})_{i\in \mathbb{N}}$. 


Now compactness comes into play: This sequence must have limit points \emph{in} $S_\approx(s,\tau)$. 
If there is \js{Habe jetzt den einzelnen Punkt herausgenommen, da eventuell ``weiter hinten'' noch nicht-schnitt Transitionen auftauchen koennten.} 
one limit point already in $T^\cap_\tau$, 
by construction there will be a scheduler that uses only transitions from $T^\cap_\tau$ when leaving $s$. 
If there is a limit point $\mu^{\ast}$ in $T^{\neg\cap_1}_\tau$, there is a non-trivial loop $\mu^\ast \Rightarrow_c \gamma^\ast \Rightarrow_c \mu^\ast$
for some $\gamma^\ast \in T^{\neg\cap_2}_\tau$ which 
would contradict the quotient property as it would render at least two states bisimilar.
Therefore we conclude that this case cannot happen and $\mu$ can be omitted from $T^{\neg\cap_1}_\tau$ without losing bisimilarity.

As $\mu$ was chosen arbitrarily, this construction can be done for all elements in $T^{\neg\cap_1}_\tau$ -- even if they might be uncountably
many -- (and analogously for $T^{\neg\cap_2}_\tau$).
Therefore all transitions in $T^{\neg\cap_1}_\tau$ (and analogously for $T^{\neg\cap_2}_\tau$) may be omitted without losing bisimilarity.

For the case $a\neq \tau$ we may omit those strong $a$-transitions that can be mimicked by a weak a transition (after leaving out this strong transition).
Observe that there cannot be cyclic relations ``$s$ uses the $a$ transition of $t$ (with probability 1)'' and vice versa, 
because then $s$ and $t$ would be bisimilar, which
is a contradiction to the quotient property.

For the general case of an arbitrary intersection (i.e.\ possibly more than two automata)
observe the following:
Let $A$ be an arbitrary set of weakly bisimilar quotients. We have to show that the automaton $P_0=\cap_{P\in A}P$
is still bisimilar. Pick an arbitary automaton $P \in A$ we have to show that all transitions of $P$ that are not in $P_0$ are redundant.
We may find for every such transition $tr$ a bisimilar automaton $P_{tr}$ that does not have such a transition (for otherwise the transition would have to be
in the intersection).
By the above procedure for two automata we show that the transition $tr$ is redundant. This can be done for all other transitions not in $P_0$.
\end{proof}

\begin{remark}
The crucial point in the proof of Lemma \ref{lem:B} is compactness. 
For general quotients a sequence of redundant distributions may not have a limit in $S_\approx(s,\tau)$
(This would be e.g.~the distribution $(1,\tau,\Delta_3)$ in Remark \ref{rem:counterexample}). 
Sequential compactness ensures that there exists such a limit in  $S_\approx(s,\tau)$.
\end{remark}


As we have shown that the intersection of arbitrary sets of bisimilar automata is again bisimilar by 
Lemma \ref{lem:A} and \ref{lem:B}, the following theorem is immediate. It extends Theorem 1 of \cite{tacas:13} to the 
compact automata case.

\begin{theorem}
\label{th:A}
Let $P$ be compact wrt.~$R$. Then $(\nicefrac{\mathfrak{Q}_R (P)}{\someequiv},\subseteq)$ has a 
unique minimal element. 
\end{theorem}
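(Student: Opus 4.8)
The plan is to exhibit the minimal element explicitly as the intersection of \emph{all} automata in the poset, and then invoke the lemmas already established to see that this intersection does not leave the poset. First I would observe that $\nicefrac{\mathfrak{Q}_R (P)}{\someequiv}$ is a nonempty set: it contains the class of $\nicefrac{P}{R}$ (for $R=\approx$ one first rescales, using the remark after Def.~\ref{def:rescale}), and although it may contain uncountably many elements because of differing unreachable fragments, it is a genuine set, so an intersection ranging over (a choice of representatives of) it makes sense.

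Next, set $P_0 := \bigcap_{Q \in \nicefrac{\mathfrak{Q}_R (P)}{\someequiv}} Q$, using the arbitrary-intersection operation of Def.~\ref{def:setoperations}. By Lemma~\ref{lem:A} (for $R=\sim$) and Lemma~\ref{lem:B} (for $R=\approx$), this operation is well-defined over an arbitrary set of bisimilar automata precisely because $P$ is compact, and moreover the result is again bisimilar to $P$: the extreme points $\extremalp(S_R(s,a))$ — together, in the weak case, with the rescaled weak transitions needed to generate them, which survive the intersection by the sequential-compactness argument in the proof of Lemma~\ref{lem:B} — are common to every $Q$ and are therefore retained in $P_0$, so $P_0$ still generates each $S_R(s,a)$. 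Since every $Q$ is a quotient wrt.~$R$ (and rescaled when $R=\approx$), so is $P_0$, whence $[P_0]_{\someequiv} \in \nicefrac{\mathfrak{Q}_R (P)}{\someequiv}$. Finally, by construction the state set and transition set of $P_0$ are — modulo the canonical identifications in $\nicefrac{S^\circ}{R}$ — contained in those of every $Q$, i.e.\ $P_0 \subseteq Q$ for all $Q \in \nicefrac{\mathfrak{Q}_R (P)}{\someequiv}$, so $[P_0]_{\someequiv} = \bot(\nicefrac{\mathfrak{Q}_R (P)}{\someequiv})$. A bottom element of a poset is a fortiori its unique minimal element: if $Q$ is minimal then $P_0 \subseteq Q$ forces $Q \someequiv P_0$, and antisymmetry of $\subseteq$ on $\someequiv$-classes yields uniqueness.

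All the real content is already carried by Lemmas~\ref{lem:A} and~\ref{lem:B}; the present theorem is then just the observation that a lattice closed under arbitrary meets has a least element. The one place where difficulty could still hide is the claim that the arbitrary intersection stays inside the poset, and the main obstacle there is exactly compactness: without it (Remark~\ref{rem:counterexample}) a family of bisimilar quotients can intersect to something strictly smaller that is no longer bisimilar to $P$, because a sequence of redundant distributions may fail to have its limit in $S_R(s,a)$. Since we assume $P$ compact, this cannot occur, and the argument above goes through.
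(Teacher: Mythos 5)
Your proposal is correct and follows essentially the same route as the paper, which derives the theorem as an immediate consequence of Lemmas~\ref{lem:A} and~\ref{lem:B}: the bottom element is obtained by intersecting all (representatives of) the bisimilar quotients, compactness guaranteeing that this arbitrary intersection stays bisimilar. You merely spell out the details (non-emptiness, membership of the intersection in the poset, and that a bottom element is the unique minimal element) that the paper leaves implicit.
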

\js{Das haette ich gerne separat ausgekostet: Fuer endliche Faelle wurde das im TACAS papier definiert und stimmt mit unserer Notation ueberein.
    Unsere Notation bietet aber die Einsicht ebenfalls fuer unendliche Automaten, die mit den Preorders im TACAS Papier ueberhaupt nicht erfasst werden koennen.}

%
%

\begin{definition}
\label{def:NF}
Let $P$ be compact wrt.~$R$, $R\in\{\sim, \approx\}$. Then there is a well-defined mapping 
$\mathcal{N}: PA \rightarrow \nicefrac{PA}{\someequiv}$, given by $P \mapsto \bot(\nicefrac{\mathfrak{Q}_R (P)}{\someequiv})$ 
that assign to every PA $P$ the minimal element in $\nicefrac{\mathfrak{Q}_R (P)}{\someequiv}$.
The mapping to $\nicefrac{\mathfrak{Q}_\sim (P)}{\someequiv}$ is just quotienting, while the mapping to $\nicefrac{\mathfrak{Q}_\approx (P)}{\someequiv}$
is quotienting followed by rescaling.
The mapping $\mathcal{N}$ (for $R$ fixed) is called \emph{normal form}.
\end{definition}

This definition corresponds to the normal form definition given in \cite{tacas:13}.

\begin{corollary}
The notations for normal forms defined in Def.~\ref{def:NF} are normal forms in the sense of \cite{tacas:13}.
\end{corollary}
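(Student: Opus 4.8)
The plan is to verify that the mappings $\mathcal{N}$ of Def.~\ref{def:NF} have the characterising properties that \cite{tacas:13} requires of a normal form: (a) $\mathcal{N}(P)$ lies in the $R$-bisimilarity class of $P$; (b) $\mathcal{N}(P)$ is minimal in that class (smallest state set, and, among those, smallest transition relation); and (c) $P$ and $P'$ are $R$-bisimilar if and only if $\mathcal{N}(P)\someequiv\mathcal{N}(P')$. In addition I would check that on finite automata $\mathcal{N}$ reduces to the concrete construction of \cite{tacas:13}, so that the present notion genuinely extends it rather than replacing it. Idempotence, $\mathcal{N}(\mathcal{N}(P))\someequiv\mathcal{N}(P)$, then follows formally from (a) and (c).

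First I would record that $\mathcal{N}$ is well defined: since $P$ is compact wrt.~$R$, Theorem~\ref{th:A} supplies a unique minimal element of $(\nicefrac{\mathfrak{Q}_R(P)}{\someequiv},\subseteq)$, so $\bot(\nicefrac{\mathfrak{Q}_R(P)}{\someequiv})$ exists and is unique up to $\someequiv$. Property (a) is then immediate from Def.~\ref{def:SetsOfQuotients}: every automaton in $\mathfrak{Q}_R(P)$ is a quotient wrt.~$R$ that is $R$-bisimilar to $P$, and in the weak case is additionally rescaled, which by the remark following Def.~\ref{def:rescale} leaves the weak-bisimilarity class unchanged; hence $\mathcal{N}(P)\mathrel{R}P$. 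Property (b) is exactly the meaning of the $\subseteq$-bottom: the minimal element carries no unreachable fraction (removing it stays inside $\mathfrak{Q}_R(P)$ and strictly decreases the automaton), so it essentially lives in $\mathfrak{Q}^*_R(P)$, all of whose members share the same number of states; minimality therefore reduces to $\subseteq$-minimality of the transition relation, which is what the lattice bottom provides.

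For property (c), in one direction note that membership in $\mathfrak{Q}_R(P)$ depends only on the $R$-class of $P$, so $P\mathrel{R}P'$ implies $\mathfrak{Q}_R(P)=\mathfrak{Q}_R(P')$ and hence the bottom elements coincide, i.e.\ $\mathcal{N}(P)\someequiv\mathcal{N}(P')$. Conversely, from $\mathcal{N}(P)\someequiv\mathcal{N}(P')$ together with $P\mathrel{R}\mathcal{N}(P)$, the fact that isomorphic automata are $R$-bisimilar, $\mathcal{N}(P')\mathrel{R}P'$, and transitivity of $R$, we obtain $P\mathrel{R}P'$. For the finite case, $\mathfrak{Q}_\sim(P)$ contains $\nicefrac{P}{\sim}$ and $\mathfrak{Q}_\approx(P)$ contains its rescaling, and the minimal-canonical-form characterisation of \cite{tacas:13} picks out precisely these; combined with the sentence following Def.~\ref{def:NF} this shows the two notions agree wherever both are defined.

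The only real obstacle is the bookkeeping that matches the preorder-based minimality of \cite{tacas:13} against the $\subseteq$-order used here. Once one observes that in a quotient the state set is already pinned down — all $R$-bisimilar quotients share the same state count, as noted after Def.~\ref{def:SetsOfQuotients} — minimality collapses to $\subseteq$-minimality of the transition relation, supplied by $\bot(\nicefrac{\mathfrak{Q}_R(P)}{\someequiv})$, and everything else is a direct unfolding of Def.~\ref{def:NF}, Theorem~\ref{th:A}, and the elementary properties of $\someequiv$, $\sim$ and $\approx$ established above.
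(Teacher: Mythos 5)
Your proposal is correct and takes essentially the same route as the paper, which offers no explicit proof at all: the corollary is treated as immediate from Def.~\ref{def:NF} together with Theorem~\ref{th:A}, and your argument simply unfolds that (well-definedness of $\bot(\nicefrac{\mathfrak{Q}_R(P)}{\someequiv})$, membership in the bisimilarity class, $\subseteq$-minimality, and the ``bisimilar iff isomorphic normal forms'' characterisation). One small slip in the bookkeeping: the remark that all automata share the same number of states is made in the paper only for the reachable-fraction sets $\mathfrak{Q}^*_R(P)$, not for $\mathfrak{Q}_R(P)$ (whose members differ wildly in their unreachable parts, cf.\ Lemma~\ref{lemma:uncountable}), but your observation that the minimal element carries no unreachable fraction repairs this, so no genuine gap results.
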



As the bisimilarity only considers the reachable fraction of the state space, the following corollary is immediate.
\begin{corollary}
Lemma \ref{lem:A}, Lemma \ref{lem:B} and Theorem \ref{th:A} also hold for $\nicefrac{\mathfrak{Q}^*_\sim (P)}{\someequiv}$
instead of $\nicefrac{\mathfrak{Q}_\sim (P)}{\someequiv}$ and $\nicefrac{\mathfrak{Q}^*_\approx (P)}{\someequiv}$ instead of 
$\nicefrac{\mathfrak{Q}_\approx (P)}{\someequiv}$.
\end{corollary}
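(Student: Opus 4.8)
The plan is to reduce everything to the results already proved for $\nicefrac{\mathfrak{Q}_R (P)}{\someequiv}$, exploiting that discarding unreachable parts is harmless for bisimilarity and that all constructions behind Lemmas~\ref{lem:A} and~\ref{lem:B} survive this restriction. Fix $R\in\{\sim,\approx\}$ and let $n$ be the number of states common to all automata in $\mathfrak{Q}^*_R (P)$, i.e.\ the number of $R$-classes among reachable states. Since $\mathfrak{Q}^*_R (P)\subseteq\mathfrak{Q}_R (P)$, the restriction of $\subseteq$ to $\nicefrac{\mathfrak{Q}^*_R (P)}{\someequiv}$ is again a partial order, so it is a poset; moreover it coincides with the order inherited from $\nicefrac{\mathfrak{Q}_R (P)}{\someequiv}$, since two reachable quotients of one bisimilarity class share the same $n$-element state set (the reachable $R$-classes inside $\nicefrac{\stateSetDisjunct}{R}$, in the notation of Def.~\ref{def:setoperations}) and differ only in their transitions.

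For the lattice property and the analogue of Theorem~\ref{th:A} I would invoke the elementary fact that, for a sub-poset $L'\subseteq L$ of a lattice $L$ carrying the restricted order, any meet, join, arbitrary infimum, or bottom element of elements of $L'$ that is computed in $L$ and happens to lie in $L'$ automatically serves as the corresponding operation in $L'$, because a bound lying in $L'$ is a fortiori a bound in $L$. So it suffices to show the \emph{closure} statement: intersections of arbitrarily many members of $\mathfrak{Q}^*_R (P)$, unions of two such members, and $\bot(\nicefrac{\mathfrak{Q}_R (P)}{\someequiv})$ all lie in $\mathfrak{Q}^*_R (P)$. For an intersection or union $Q$ of members of $\mathfrak{Q}^*_R (P)$, computed in $\mathfrak{Q}_R (P)$, Lemmas~\ref{lem:A}/\ref{lem:B} give $Q\in\mathfrak{Q}_R (P)$ (for $R=\approx$ note that intersections and unions of rescaled transition sets are again rescaled, as required by Def.~\ref{def:SetsOfQuotients}); its state set, being an intersection resp.\ union of copies of the $n$-element set of reachable $R$-classes, still has exactly $n$ elements. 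The reachable fraction $r(Q)$ is bisimilar (wrt.~$R$) to $Q$ and hence to $P$, is again a quotient wrt.~$R$, and is rescaled when $R=\approx$; thus $r(Q)\in\mathfrak{Q}^*_R (P)$ and so $r(Q)$ has $n$ states. Since $S_{r(Q)}\subseteq S_Q$ and both have size $n$, they coincide, so $Q=r(Q)\in\mathfrak{Q}^*_R (P)$. Finally, if $\bot(\nicefrac{\mathfrak{Q}_R (P)}{\someequiv})$ had an unreachable state, its reachable fraction would be a strictly smaller element of $\mathfrak{Q}_R (P)$ (still a quotient, still rescaled for $R=\approx$), contradicting minimality; hence $\bot(\nicefrac{\mathfrak{Q}_R (P)}{\someequiv})\in\mathfrak{Q}^*_R (P)$, and being $\subseteq$-below every member of $\mathfrak{Q}_R (P)\supseteq\mathfrak{Q}^*_R (P)$ it is also the bottom of $\nicefrac{\mathfrak{Q}^*_R (P)}{\someequiv}$, which gives Theorem~\ref{th:A} for the starred sets.

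The one step that is not pure bookkeeping is the repeated use of the claim that the reachable fraction of a quotient is again a quotient, which is what makes the state count legitimate. I expect the care there to go into checking that restricting the automaton to its reachable states cannot merge two previously $R$-inequivalent states: a bisimulation on $Q$ restricted to the reachable states is still a bisimulation on $r(Q)$ (transitions leaving a reachable state lead only to reachable states), so the bisimilarity on $r(Q)$ is still the identity. Equivalently one may cite the standard fact that bisimilar PA have isomorphic bisimulation quotients of their reachable fractions, which yields $|S_{r(Q)}|=n$ at once; with that established the remainder is routine.
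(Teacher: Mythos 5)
Your overall strategy---reducing everything to Lemmas~\ref{lem:A}, \ref{lem:B} and Theorem~\ref{th:A} by showing that $\mathfrak{Q}^*_R (P)$ is closed under the intersection and union of Def.~\ref{def:setoperations} and contains $\bot(\nicefrac{\mathfrak{Q}_R (P)}{\someequiv})$---is sound, and it is exactly the content that the paper compresses into the single remark that bisimilarity only sees the reachable fraction (the paper offers no further proof). The sub-poset transfer of meets and joins, the observation that rescaledness is preserved, the argument that $r(Q)$ is again a quotient wrt.~$R$, and the minimality argument for the bottom element (an unreachable state would make the reachable fraction a strictly $\subseteq$-smaller element of $\mathfrak{Q}_R(P)$, since its state set is a proper subset) are all correct.

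There is, however, one step that fails in precisely the setting the paper targets. To show that an intersection $Q$ of reachable quotients is itself reachable, you argue $S_{r(Q)}\subseteq S_{Q}$, $|S_{r(Q)}|=|S_{Q}|=n$, hence $S_{r(Q)}=S_{Q}$. This cardinality argument is valid only for finite $n$: for a countably infinite state space a proper subset can have the same cardinality, and countably infinite state spaces are the main case of interest here. (The same issue affects your fallback of citing that bisimilar PA have isomorphic reachable quotients, which again only delivers the cardinality.) The fix is to argue reachability classwise rather than by counting: every state of $Q$ is a reachable $R$-class $[x]$ shared by the reachable quotients being intersected; since $Q$ is $R$-bisimilar to such a quotient $P_1$ and $P_1$ reaches $[x]$ from its initial state, $Q$ must reach some state $R$-equivalent to $x$, and because $Q$ is a quotient wrt.~$R$ that state is $[x]$ itself. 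For unions no argument is needed at all, since transitions are only added. With that replacement your proof goes through, also for arbitrary (infinite) intersections.
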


Some of the lattices we have constructed above have the appealing property of being bounded.

\begin{lemma}
The lattices $(\nicefrac{\mathfrak{Q}^*_\sim (P)}{\someequiv},\subseteq)$ and $(\nicefrac{\mathfrak{Q}^*_\approx (P)}{\someequiv},\subseteq)$
are bounded.
\end{lemma}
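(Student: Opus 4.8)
The plan is to show that each of the two posets has both a least and a greatest element; since Lemmas~\ref{lem:A} and~\ref{lem:B} already establish the lattice structure, this is exactly what boundedness requires. For the least element there is nothing new: Theorem~\ref{th:A}, together with its corollary for the reachable-fraction variants, yields a unique minimal element of each of $(\nicefrac{\mathfrak{Q}^*_\sim(P)}{\someequiv},\subseteq)$ and $(\nicefrac{\mathfrak{Q}^*_\approx(P)}{\someequiv},\subseteq)$, and in a lattice a unique minimal element is automatically the bottom $\bot$. (In particular both posets are nonempty.) So the work is entirely in producing the top.

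For the greatest element I would take $\top := \bigcup_{A \in \mathfrak{Q}^*_R(P)} A$, the union of \emph{all} members of the poset, formed as in Def.~\ref{def:setoperations} after identifying every reachable state space with the fixed set of consecutive natural numbers set up in Sec.~\ref{sec:prelim}; such arbitrary unions are well defined in the same way as the arbitrary intersections used in Lemmas~\ref{lem:A} and~\ref{lem:B}. This $\top$ is a legitimate PA: the common state space is unchanged under unions, and although the index family may be uncountable, $T_\top$ is still a subset of $S \times Act \times \Disc{S}$, which Def.~\ref{def:pa} allows. By construction $A \subseteq \top$ for every $A \in \mathfrak{Q}^*_R(P)$, so once $\top$ is known to lie in the poset it is its greatest element. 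It thus remains to check the defining conditions of $\mathfrak{Q}^*_R(P)$ for $\top$. First, $\top \mathrel{R} P$: this is the join half of the proofs of Lemmas~\ref{lem:A}/\ref{lem:B} --- every transition of every $A$ leads to a distribution whose $R$-class lies in the fixed compact convex set $S_R(s,a)$, and adjoining such transitions never breaks bisimilarity, so the relation identifying corresponding states across all the $A$ and $P$ is an $R$-bisimulation; in the weak case $\top$ is moreover rescaled, since every one of its $\tau$-transitions comes from a rescaled $A$ and hence already satisfies $\mu(s)\in\{0,1\}$. Second, $\top = r(\top)$, because every state of $\top$ is already reachable in some $A \subseteq \top$. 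Third, $\top$ is a quotient wrt~$R$: all members of $\mathfrak{Q}^*_R(P)$ have the same number $n$ of states (the number of reachable $R$-classes of $P$), so $\top$ also has $n$ states; being reachable and bisimilar to $P$, and since bisimilar automata have equally many reachable $R$-classes, $\top$ has exactly $n$ reachable $R$-classes, whence all $n$ of its states lie in distinct classes, i.e.~$\top\someequiv\nicefrac{\top}{R}$. Therefore $\top \in \mathfrak{Q}^*_R(P)$ is the top element and both lattices are bounded.

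The step I expect to be the main obstacle is exactly the verification that $\top$ still belongs to $\mathfrak{Q}^*_R(P)$ --- i.e.~that unioning over the whole (possibly uncountable) family of bisimilar reachable quotients neither enlarges the reachable behaviour beyond the fixed sets $S_R(s,a)$ nor collapses any two states. This is precisely the point where compactness of $P$ (via Lemmas~\ref{lem:A}/\ref{lem:B} and the Krein--Milman description of $S_R(s,a)$) and the uniform state count of reachable quotients are both used.
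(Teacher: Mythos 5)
Your proposal is correct and follows essentially the same route as the paper: the bottom element is the normal form (unique minimal element), and the top element is the union of all transition relations over the whole family, which is a legitimate transition relation because Def.~\ref{def:pa} places no cardinality restriction on $T$ and the common reachable state space stays fixed and countable. You spell out the verification that this union still lies in $\mathfrak{Q}^*_R(P)$ (bisimilarity, rescaledness, reachability, quotient property) in more detail than the paper, which simply asserts that the union is a valid transition relation, but the underlying argument is the same.
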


\begin{proof}
The lower bound is clearly given by the normal forms defined above. So it remains to show that there is also an upper bound.
Let $P=\aPA$, for $(\nicefrac{\mathfrak{Q}^*_\sim (P)}{\someequiv},\subseteq)$ and $(\nicefrac{\mathfrak{Q}^*_\approx (P)}{\someequiv},\subseteq)$ 
we know that it is sufficient to consider $(\nicefrac{\mathfrak{Q}^*_\sim (r(P))}{\someequiv},\subseteq)$ and $(\nicefrac{\mathfrak{Q}^*_\approx (r(P))}{\someequiv},\subseteq)$.
As $\stateSet$ is countable, also the set of reachable states will be countable.
As there is no restriction on $\transitionRelation$ wrt.~finiteness, the union over all transition relations in $\nicefrac{\mathfrak{Q}^*_R (P)}{\someequiv}$ (identified by
R) will be a valid transition relation (i.e.~$\cup_R$ of all transition relations of quotients in the set $\nicefrac{\mathfrak{Q}^*_R (P)}{\someequiv}$).
\end{proof}
\ms{Ok, ich denke ich verstehe das. Der Punkt ist, dass jetzt Automaten
mit (ueberabzaehlbar) unendlich vielen Transitionen erlaubt sind. Unendlichkeit
des Zustandsraums bzw der Aktionsmenge braucht man gar nicht.}

\begin{lemma}
For a finite automaton $P$,
\begin{enumerate}
  \item the maximal elements in $(\nicefrac{\mathfrak{Q}^*_\sim (P)}{\someequiv},\subseteq)$ and $(\nicefrac{\mathfrak{Q}^*_\approx (P)}{\someequiv},\subseteq)$ are not necessarily finite.
  \item the minimal elements in $(\nicefrac{\mathfrak{Q}^*_\sim (P)}{\someequiv},\subseteq)$ and $(\nicefrac{\mathfrak{Q}^*_\approx (P)}{\someequiv},\subseteq)$ are finite.
\end{enumerate}
\end{lemma}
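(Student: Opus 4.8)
The plan is to treat the two items by essentially disjoint arguments: item 1 is established by a single small example whose lattice has an infinite greatest element, while item 2 follows from the observation that the bottom element is always bounded above by the (finite) quotient.

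For item 1, I would take the finite automaton $P=(\{1,2,3\},\{\tau,a,b,c\},T,1)$ with $T=\{(1,c,\Delta_2),(1,c,\Delta_3),(2,a,\Delta_2),(3,b,\Delta_3)\}$, where $a,b,c$ are pairwise distinct external actions. First I would check that $P$ is a reachable rescaled quotient wrt.\ both $\sim$ and $\approx$: the states $1,2,3$ are pairwise non-bisimilar (only $1$ enables $c$; state $2$ enables $a$ but, even weakly, not $b$, and state $3$ vice versa), all states are reachable from $1$, and $P$ has no $\tau$-transitions, so it is trivially rescaled; hence $P\in\mathfrak{Q}^*_\sim(P)$ and $P\in\mathfrak{Q}^*_\approx(P)$, and, $P$ being finite, it is compact, so the lattices of Lemmas~\ref{lem:A} and~\ref{lem:B} exist. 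Next, for each $t\in[0,1]$ let $P_t$ be $P$ with the extra transition $(1,c,t\Delta_2\oplus(1-t)\Delta_3)$ added. Since $1\stackrel{c}{\rightarrow}_C t\Delta_2\oplus(1-t)\Delta_3$ is already a (strong, hence also weak) combined transition of $P$, adding it changes neither the set of combined transitions from $1$ nor the sets $S_\sim(1,c)$, $S_\approx(1,c)$; moreover all $\sim$- and $\approx$-classes are singletons, so the lifted matching condition reduces to equality of distributions. Hence $P_t\sim P$ and $P_t\approx P$, the new transition identifies no states, and $P_t$ remains a reachable rescaled quotient, i.e.\ $P_t\in\mathfrak{Q}^*_R(P)$ for $R\in\{\sim,\approx\}$. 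Since all the $P_t$ share the state set $\{1,2,3\}$ with initial state $1$, the transitions $(1,c,t\Delta_2\oplus(1-t)\Delta_3)$, $t\in[0,1]$, are pairwise distinct; as the lattice $(\nicefrac{\mathfrak{Q}^*_R(P)}{\someequiv},\subseteq)$ is bounded by the preceding lemma, its greatest element $\top$ satisfies $P_t\subseteq\top$ for all $t$, so a representative of $\top$ has an uncountable transition relation, i.e.\ $\top$ is not finite.

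For item 2, I would sandwich the bottom element below a finite member of the lattice. Let $P=(S,Act,T,s_0)$ be finite. The reachable fraction $Q:=r(\nicefrac{P}{\sim})$ is again a reachable quotient wrt.\ $\sim$ --- deleting unreachable states cannot make two reachable states bisimilar, since the support of any transition out of a reachable state consists of reachable states, so every distinguishing behaviour survives --- and $Q\sim P$; hence $Q\in\mathfrak{Q}^*_\sim(P)$. As $|S|<\infty$ and $|T|<\infty$, $Q$ has finitely many states and transitions, i.e.\ $Q$ is finite. The bottom element $\bot$ of $(\nicefrac{\mathfrak{Q}^*_\sim(P)}{\someequiv},\subseteq)$ satisfies $\bot\subseteq Q$, and by Def.~\ref{def:setoperations} $P'\subseteq P''$ means both the state set and the transition set of $P'$ are contained in those of $P''$; so $\bot$ is finite as well. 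For the weak case the same argument applies to $Q:=r((\nicefrac{P}{\approx})^{res})$: $\nicefrac{P}{\approx}$ is finite, rescaling sends each transition to at most one transition (Def.~\ref{def:rescale} and the subsequent remark), so $Q$ is a finite element of $\mathfrak{Q}^*_\approx(P)$, and again $\bot\subseteq Q$ forces $\bot$ finite. This is consistent with Def.~\ref{def:NF}, where the normal form is described precisely as quotienting, followed by rescaling in the weak case.

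The main obstacle is the bookkeeping in item 1: one must simultaneously verify that every $P_t$ really is bisimilar to $P$ (this is what forces all classes to be singletons, so that lifted matching collapses to plain equality of distributions), that no $P_t$ identifies the three states so that it stays a \emph{quotient}, and that the added transitions remain pairwise distinct in $\nicefrac{S^\circ}{R}$ so that the union genuinely has infinitely (here uncountably) many transitions. Once these points are nailed down, item 2 is routine, since $\subseteq$ is inherited downward from any finite member of the lattice.
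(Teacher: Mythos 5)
Your proposal is correct and follows essentially the same route as the paper: item~1 is witnessed by a three-state automaton in which the convex combinations $t\Delta_2\oplus(1-t)\Delta_3$ of two Dirac transitions out of the initial state are all bisimilar, so the top element must carry uncountably many transitions (the paper uses $\tau$-labelled transitions where you use an external action $c$, which is a harmless and in fact slightly cleaner choice since it sidesteps rescaling and any weak-bisimulation collapse). You are more explicit than the paper in verifying that each $P_t$ is a reachable rescaled quotient, and you supply an actual argument for item~2 (sandwiching $\bot$ under the finite quotient $r(\nicefrac{P}{R})$, rescaled in the weak case), which the paper's proof leaves entirely implicit; both additions are sound.
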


\begin{proof}
Assume that $P=\aPA$ is given where $\stateSet=\{A,B,C\}$. Assume further that transitions $(A,\tau,\Delta_B)$ and $(A,\tau,\Delta_C)$
exist. Then every $(A,\tau,c\Delta_B\oplus (1-c)\Delta_C)$ would also be a bisimilar transition for every $c\in (0,1)$. 
Thus, the number of such distributions is uncountably infinite.
The union of all those transitions would therefore lead to a non-finite automaton.
\end{proof}

\begin{corollary}
For a finite automaton $P$,
the minimal elements in $(\nicefrac{\mathfrak{Q}_\sim (P)}{\someequiv},\subseteq)$ and $(\nicefrac{\mathfrak{Q}_\approx (P)}{\someequiv},\subseteq)$ are finite.
\end{corollary}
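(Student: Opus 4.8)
The plan is to reduce the statement to the preceding Lemma, which already establishes finiteness of the minimal elements of the \emph{reachable} quotient posets $(\nicefrac{\mathfrak{Q}^*_R(P)}{\someequiv},\subseteq)$ for $R\in\{\sim,\approx\}$. First I would note that a finite automaton $P$ is compact wrt.~every $R\in\{\sim,\approx\}$: since $S$ is finite, each $S_R(s,a)$ is a bounded and closed subset of a finite-dimensional $\R^n$, hence compact by the Heine--Borel theorem. Therefore Theorem~\ref{th:A} applies and $(\nicefrac{\mathfrak{Q}_R(P)}{\someequiv},\subseteq)$ has a \emph{unique} minimal element, which by Def.~\ref{def:NF} is the normal form $\mathcal{N}(P)=\bot(\nicefrac{\mathfrak{Q}_R(P)}{\someequiv})$. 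So it suffices to exhibit a finite representative of $\mathcal{N}(P)$.

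One way is a direct size estimate. By Def.~\ref{def:NF}, $\mathcal{N}(P)$ is obtained from $P$ by quotienting (case $R=\sim$) or by quotienting followed by rescaling (case $R=\approx$); restricting to the reachable fraction only removes states and transitions. Quotienting replaces the state set by $\nicefrac{S}{R}$, which has at most $|S|$ elements, and maps each transition of $T$ onto one lifted transition, so $|\nicefrac{T}{R}|\leq|T|$; rescaling (as in the Remark following Def.~\ref{def:rescale}) leaves the state set unchanged and sends the transition set into a set of no larger cardinality. Hence, starting from a finite $P$, every representative of $\mathcal{N}(P)$ obtained this way has finitely many states and finitely many transitions.

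Alternatively --- and this is the route that actually invokes the preceding Lemma --- I would show that the bottom element is already reachable: if $A$ represents $\bot(\nicefrac{\mathfrak{Q}_R(P)}{\someequiv})$, then $r(A)\subseteq A$, and $r(A)$ is again a quotient wrt.~$R$ that is bisimilar to $A$ (in the weak case still rescaled), so $r(A)\in\mathfrak{Q}_R(P)$; minimality of $A$ together with antisymmetry of $\subseteq$ on the poset (Lemma~\ref{lem:A}, Lemma~\ref{lem:B}) forces $A\someequiv r(A)$, i.e.~$A\in\mathfrak{Q}^*_R(P)$. Since $\mathfrak{Q}^*_R(P)\subseteq\mathfrak{Q}_R(P)$, the element $A$ is then also the minimal element of $(\nicefrac{\mathfrak{Q}^*_R(P)}{\someequiv},\subseteq)$, which is finite by the preceding Lemma. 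I expect the only slightly delicate point to be this reachability argument, namely verifying that the reachable fraction of a quotient is again a quotient (and, in the weak case, again rescaled); this is routine since in a quotient no two distinct states are $R$-related, so $\nicefrac{r(A)}{R}\someequiv r(A)$, and the rescaling operation only ever alters $\tau$-self-loops, which passing to the reachable fraction leaves untouched. The direct computation in the previous paragraph avoids this point altogether.
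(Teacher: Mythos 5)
Your second route is exactly the paper's (implicit) argument: the corollary is stated there without its own proof, as an immediate consequence of the preceding lemma, the point being that the unique minimal element cannot contain unreachable states and hence coincides with the minimal element of the starred poset, which the lemma asserts is finite. Your first route reaches the same conclusion but be aware that $\mathcal{N}(P)$ is in general only \emph{contained in} the (rescaled) quotient rather than equal to it --- redundant non-extreme transitions must still be dropped, cf.\ the bounded example in Sec.~\ref{sec:examples}, where the quotient of $P$ is $P$ itself yet $\bot=P_{\{0,1\}}$ is strictly smaller --- although containment in a finite automaton of course still yields finiteness.
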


\begin{lemma}
\label{lemma:uncountable}
The lattices $(\nicefrac{\mathfrak{Q}_\sim (P)}{\someequiv},\subseteq)$ and $(\nicefrac{\mathfrak{Q}_\approx (P)}{\someequiv},\subseteq)$ are unbounded
whenever there is at least one action $a\in Act$, $a\neq \tau$.
\end{lemma}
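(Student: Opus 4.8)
The plan is to show that the poset $(\nicefrac{\mathfrak{Q}_\sim (P)}{\someequiv},\subseteq)$ has no greatest element by constructing, for any candidate automaton, a strictly larger bisimilar quotient. The key observation is that the presence of an external action $a\neq\tau$ guarantees that the set $S_\sim(s_0,b)$ is nonempty for at least one action, but more importantly it allows us to attach arbitrarily many unreachable states carrying $a$-transitions that are all bisimilar to existing states, thereby inflating any given quotient. First I would fix an arbitrary $Q\in\mathfrak{Q}_\sim (P)$ and recall that the relevant object is its $\someequiv$-class, whose states we may take to be an initial segment (or subset) of $\mathbb{N}$.

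The main construction is as follows. Pick the external action $a\neq\tau$. Take $Q$ and add a fresh state $t\notin S_Q$ together with the transition $(t,a,\Delta_t)$ (a self-loop under $a$), and — to make $t$ bisimilar to some state already present — if $Q$ has no state whose only outgoing behaviour is an $a$-self-loop, one can instead add two fresh states $t_1,t_2$ with $(t_1,a,\Delta_{t_1})$ and $(t_2,a,\Delta_{t_2})$; since $t_1$ and $t_2$ have identical behaviour they would be identified by any bisimulation, so to keep the result a genuine quotient we must be slightly more careful. The clean way is: add one fresh state $t$ with $(t,a,\Delta_t)$, but only in the case where no existing state of $Q$ has exactly this behaviour; in the opposite case add a fresh state $t$ together with $(t,a,\Delta_u)$ for a suitable existing state $u$, chosen so that $t$ is bisimilar to an existing state $w$ of $Q$ while $t\neq w$ — which is possible precisely because once $Q$ contains such an $a$-transitioning state, a duplicate "one step before" it is bisimilar to it. In all cases the new automaton $Q'$ is still a quotient wrt.\ $\sim$ (no two states of $Q'$ are bisimilar that were not already identified), is still $\sim$-bisimilar to $P$ (the added state is unreachable from $s_0$, so by the remark that the unreachable fraction is irrelevant for bisimilarity, $Q'\sim Q\sim P$), and satisfies $Q\subsetneq Q'$ via the canonical embedding into $\nicefrac{(S_Q\disjunctunion S_{Q'})}{\sim}$. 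Hence $[Q]_{\someequiv}<[Q']_{\someequiv}$, so $Q$ is not maximal; since $Q$ was arbitrary, no greatest element exists, and the same argument applies verbatim to $\approx$ (the added self-loop under $a\neq\tau$ is unaffected by rescaling, which only touches $\tau$-transitions). The lower bound exists (it is the normal form $\mathcal{N}(P)$ of Def.~\ref{def:NF}, or more directly, Theorem~\ref{th:A} gives a unique minimal element), so the lattice fails to be bounded only because the top is missing: it is \emph{unbounded above}.

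The step I expect to be the main obstacle is the bookkeeping around the quotient property: one must ensure that the fresh unreachable state added to $Q$ is genuinely \emph{not} bisimilar to every other state (so that the result is still a quotient), while simultaneously ensuring that it \emph{is} bisimilar to at least one state (so that adding it keeps us inside the $\sim$-class of $P$). Resolving this requires a small case distinction on whether $Q$ already contains a state all of whose behaviour is captured by $a$-transitions reachable inside $Q$; in the degenerate situation where $Q$ is a single state with a self-loop under $a$, one simply observes that one can instead add a fresh state $t$ with $(t,a,\Delta_{s_0})$, which is bisimilar to $s_0$ yet distinct from it, again enlarging the quotient. Once this case analysis is in place the rest is routine, using Lemma~\ref{lem:isocup} and the embeddings of Def.~\ref{def:setoperations} to witness $Q\subsetneq Q'$.
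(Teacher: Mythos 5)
Your overall strategy --- show that no element of $(\nicefrac{\mathfrak{Q}_\sim (P)}{\someequiv},\subseteq)$ is maximal by attaching a fresh \emph{unreachable} state to an arbitrary quotient $Q$ --- is viable and is a reasonable alternative to the paper's argument (the paper instead produces uncountably many pairwise non-bisimilar unreachable states spread over the automata $P'_A$, $A\subseteq\mathbb{N}$, so that any common upper bound would need an uncountable state space, contradicting the countability of $S$ in Def.~\ref{def:pa}). However, your construction has a genuine flaw at exactly the point you flag as the ``main obstacle'', and your resolution of it goes in the wrong direction. You insist that the fresh state $t$ be bisimilar to some existing state $w$ of $Q$ ``so that adding it keeps us inside the $\sim$-class of $P$''. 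This requirement is (i) unnecessary: two PA are bisimilar iff their \emph{initial} states are related, and the paper explicitly notes that unreachable parts play no role for bisimilarity, so an unreachable $t$ that is bisimilar to \emph{nothing} still yields $Q'\sim Q\sim P$; and (ii) fatal: if $t\sim w$ with $t\neq w$, then $Q'$ is not isomorphic to $\nicefrac{Q'}{\sim}$, i.e.\ $Q'$ is not a quotient wrt.~$\sim$, hence $Q'\notin\mathfrak{Q}_\sim(P)$ and the claimed inequality $[Q]<[Q']$ does not take place inside the poset at all. This breaks your second case and, in particular, your ``degenerate situation'' fix (adding $t$ with $(t,a,\Delta_{s_0})$ makes $t\sim s_0$ and destroys the quotient property).

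The correct move in the case where the $a$-self-loop behaviour is already realized in $Q$ is the opposite of what you propose: add an unreachable state whose behaviour is \emph{not} bisimilar to that of any state of $Q$. That such a state always exists is the content of the paper's two-step construction: from $s_1$ with $s_1\stackrel{a}{\rightarrow}\Delta_{s_1}$ one builds countably many pairwise non-bisimilar states $i\stackrel{\tau}{\rightarrow}\frac{1}{i}\Delta_{s_0}\oplus(1-\frac{1}{i})\Delta_{s_1}$, and then uncountably many pairwise non-bisimilar states $s_A\stackrel{\tau}{\rightarrow}\mu_A$ indexed by the supports $A\subseteq\mathbb{N}$; since $Q$ has only countably many states, at least one (indeed, uncountably many) of these behaviours is fresh, and adjoining it preserves the quotient property, preserves bisimilarity with $P$, and strictly enlarges $Q$. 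Without this (or an equivalent) existence argument for a genuinely new behaviour, your proof does not cover all $Q$, and with your current fix it leaves the set $\mathfrak{Q}_\sim(P)$ altogether.
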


\begin{proof}
\js{Beweis (und Formulierung des Lemmas) ziemlich neu, daher auch fehleranfaellig...}
Let $P=\aPA$ and $a\in Act$, $a\neq \tau$. We can construct two unreachable non-bisimilar states $s_1$, 
$s_2$ by using the transition $s_1\stackrel{a}{\rightarrow}\Delta_{s_1}$ which is not possible by $s_2$
(possibly $s_1\sim s$ or $s_2\sim s$ for some $s\in S$, then we would use the corresponding state(s) from $S$,
not adding them by a disjunct union in the following construction).
So without loss of generality we may assume that $P=(S\disjunctunion \{s_1,s_2\}, Act, T \cup \{s_1\stackrel{a}{\rightarrow}\Delta_{s_1}\})$
The further construction goes in two steps. The first step is to construct a countable set of `fresh' distinguished unreachable states
$U=\mathbb{N}$ in the following way.
use a single transition $i\stackrel{\tau}{\rightarrow}\frac{1}{i}\Delta_{s_0}\oplus (1-\frac{1}{i})\Delta_{s_1}$ for every $i\in U$
(again, if $i\sim s$ for some $s\in S$, we may assume that $s\in U$, but will not add $s$ again to the state space). So we reach the
bisimilar automaton $P'=(S\disjunctunion U, Act, T \cup \{i\stackrel{\tau}{\rightarrow}\frac{1}{i}\Delta_{s_0}\oplus (1-\frac{1}{i})\Delta_{s_1}\}_{i\in U}, s_0)$.
It is easy to see that all states in $U$ are not bisimilar. 
For the second step notice that the powerset of a countable is uncountable.
For every such subset $A\subseteq \mathbb{N}$ we may choose a distribution $\mu_A$ where $\mu_A(s)>0$ if $s\in A$, $0$ otherwise.
Now, we can construct (uncountably many) PA $P'_A$ bisimilar to $P$ where we add one additional unreachable state $s_A$ with the transition 
$s_A\stackrel{\tau}{\rightarrow}\mu_A$.
Each of those unreachable states would need a separate state in the maximal element of our lattice (leaving alone the states that are bisimilar
to one state of $S$ by chance), 
meaning that there would be an uncountable number of states
in the maximal element. This is a contradiction to the countability of the state space.
The weak case follows similarly.
\end{proof}

\section{Examples}
\label{sec:examples}

The first example is given in Fig.~\ref{fig:strong_bisim}.
We start with the two strongly bisimilar automata at the top of the figure.
The intersection of both is the automaton at the bottom of the figure.\\

\begin{figure}
\begin{center}
  \subfloat[Strongly bisimilar case]{\label{fig:strong_bisim}\includegraphics[width=5cm]{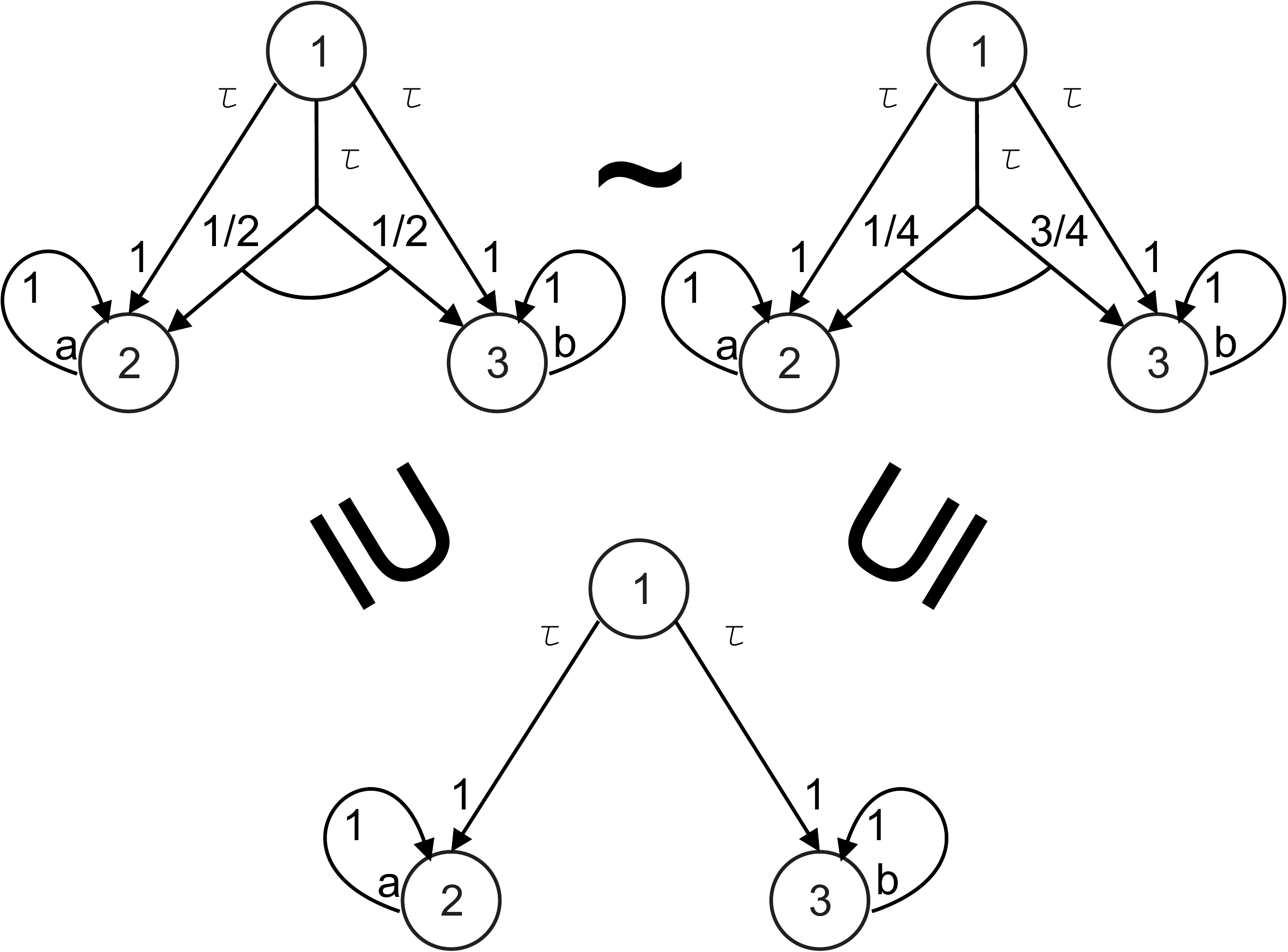}} \qquad
  \subfloat[Weakly bisimilar case]{\label{fig:weak_bisim}\includegraphics[width=5cm]{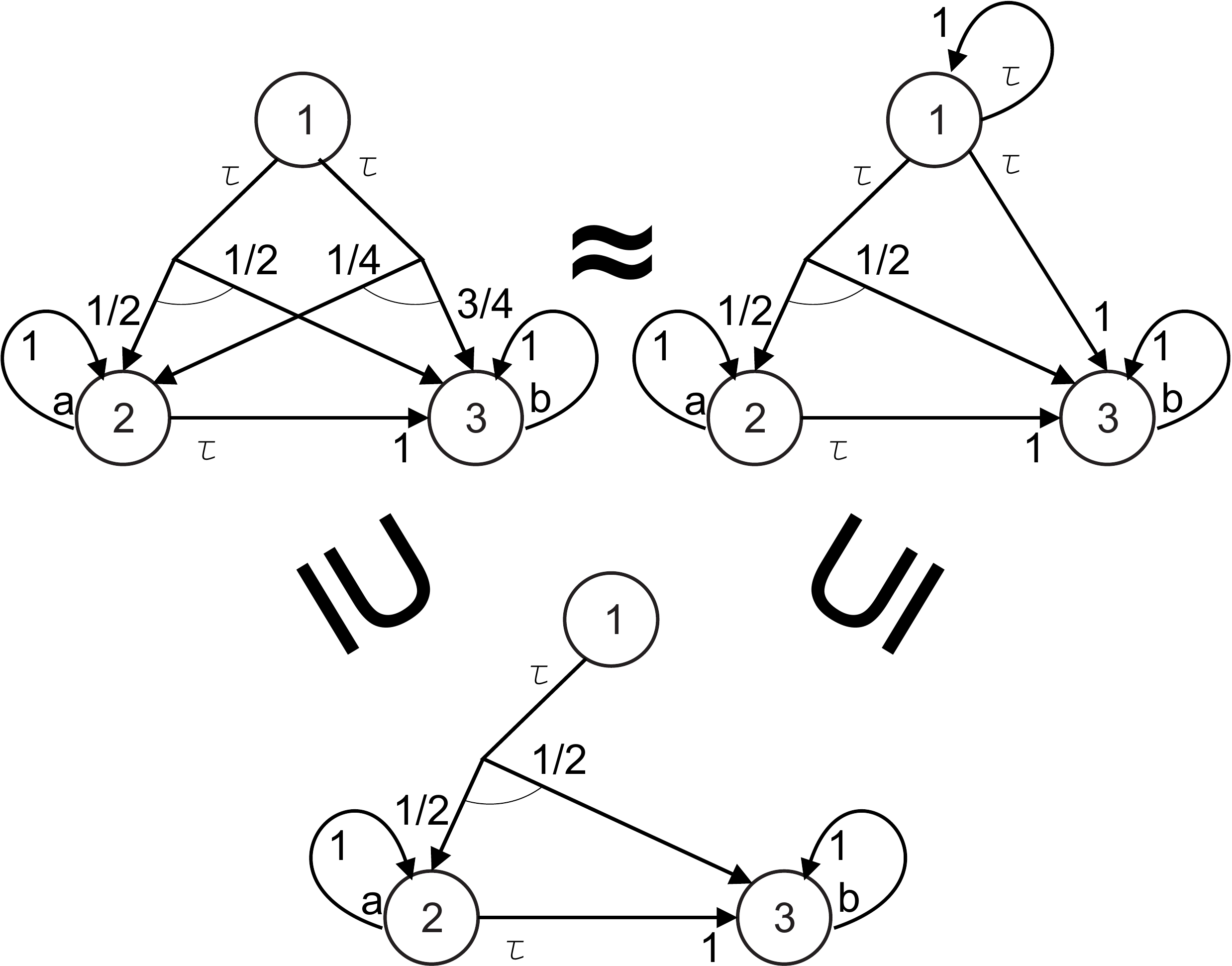}}
\caption{Intersection of bisimilar automata}
\label{fig:intersection_collection}
\end{center}
\end{figure}

The next example is given in Fig.~\ref{fig:weak_bisim}.
We start with the two weakly bisimilar automata at the top of the figure
Bisimilarity essentially stems from the following facts:
\ms{Hier den JS-Kommentar eingebaut:}
Firstly, the transition $(1,\tau, \Delta_3)$ of the PA on the right can be mimicked by the left automaton by a Dirac determinate scheduler that 
in state 1 always chooses $(1,\tau,\frac{1}{2}\Delta_2 \oplus \frac{1}{2}\Delta_3)$ and in state 2 always $(2,\tau,\Delta_3)$ and stops in state 3.
\ms{``stops in state 3'' besser weglassen?!}
Secondly, the transition
$(1,\tau,\frac{1}{4}\Delta_2 \oplus \frac{3}{4}\Delta_3)$
of the left automaton can be mimicked by the right automaton
by choosing each of the $\tau$-transitions emanating from state 1
with probability $\frac{1}{2}$. 
The intersection of both -- which is the canonical form -- is the automaton at the bottom of the figure.
\\

The next example shows that it is essential to have elements of $\nicefrac{\mathfrak{Q}_\sim (P)}{\someequiv}$, otherwise the intersection 
will not make sense. This example is given in Fig.~\ref{fig:wrong_weak_example}.
It is clear that e.g.~the distribution $\frac{3}{4}\Delta_2 \oplus \frac{1}{4} \Delta_3$ cannot be realised starting from state $1$
by the right automaton, but it can be realised by the left automaton, so both cannot be bisimilar.
Therefore, the intersection of both automata is not bisimilar to the left automaton.


\begin{figure}
\begin{center}
  \subfloat[Non-bisimilar automata]{\label{fig:wrong_weak_example}\includegraphics[width=4.7cm]{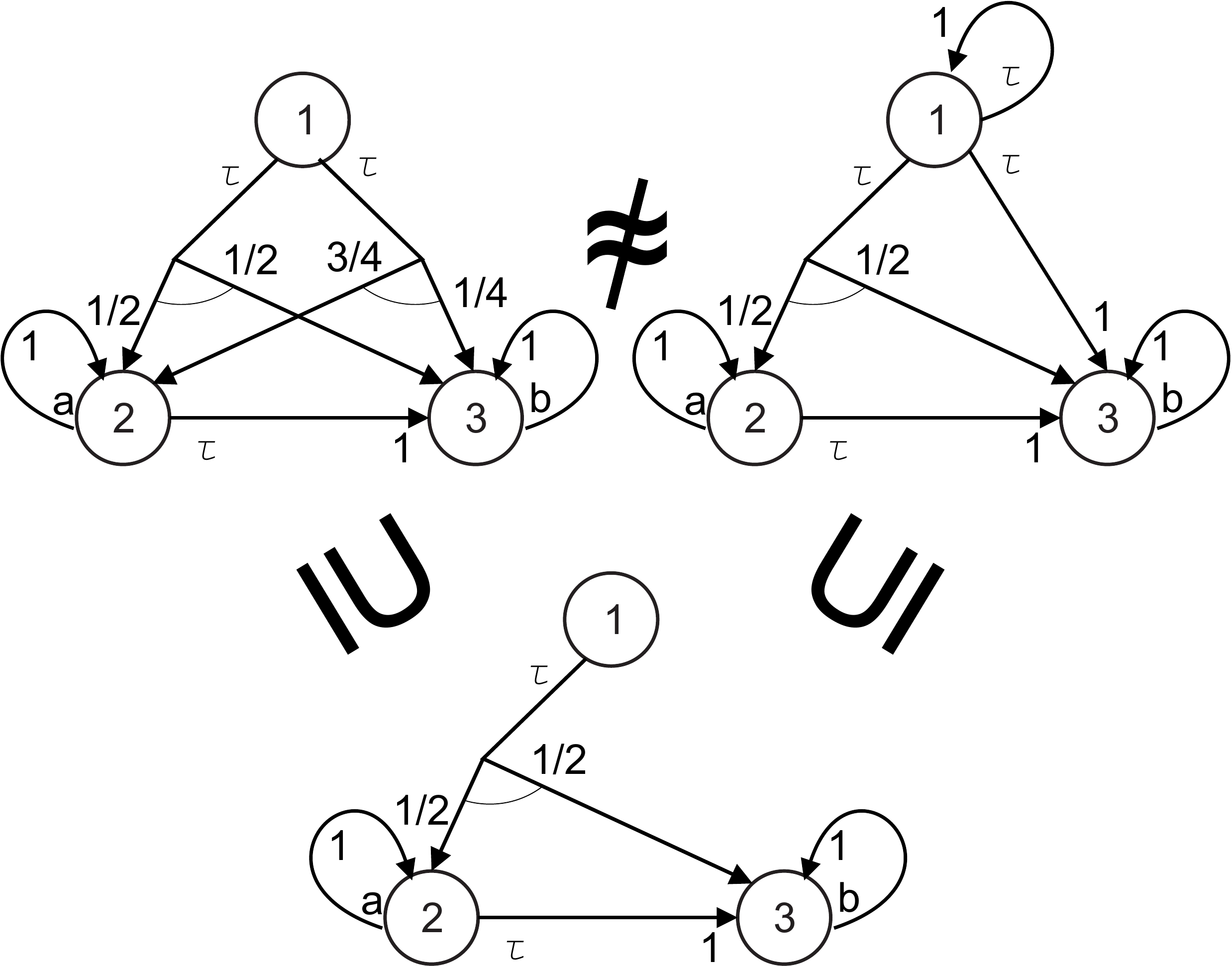}} \qquad
  \subfloat[The bounded case]{\label{fig:boundedcase}\includegraphics[width=8.5cm]{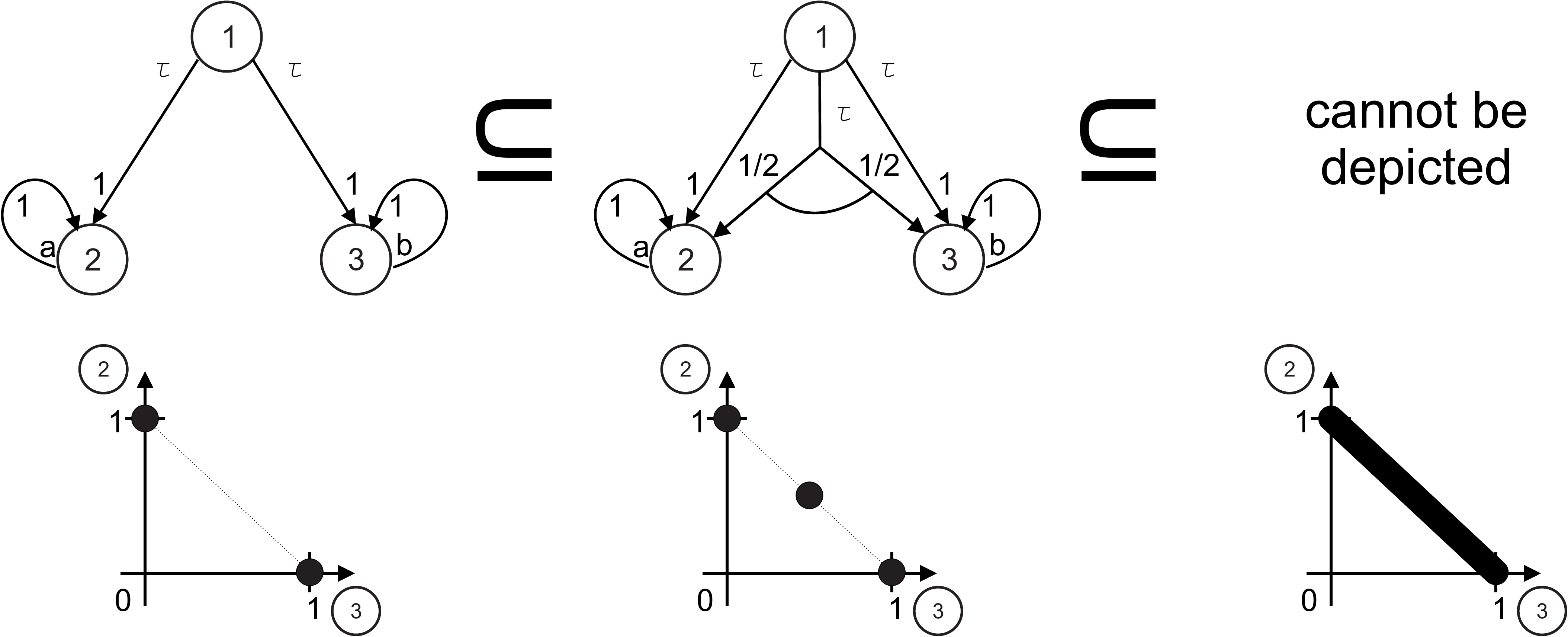}}
\caption{Examples continued}
\label{fig:some_collection}
\end{center}
\end{figure}

\subsection{A bounded example}
Assume that we consider strong bisimulation and want to calculate $\nicefrac{\mathfrak{Q}^*_\sim (P)}{\someequiv}$
for the automaton $P$ given in Fig.~\ref{fig:boundedcase} (middle). The least element (let's call it $\bot$) is shown
on the left, the greatest element cannot be adequately depicted, as it has 
(uncountably) infinitely many
transitions. The situation becomes clearer when (as suggested in \cite{segala:02}) considering
distributions as points in $\mathbb{R}^n$.
This is done in the lower line of the figure.
We only show the distributions that are possible by Dirac determinate schedulers starting from
state $1$. As there are only two successor states, $\mathbb{R}^2$ suffices for our purpose.
With this picture in mind it is clear that the greatest element is the PA given by
$\top=(\{1,2,3\}, \{\tau, a, b\}, \{(2,a,\Delta_2), (3,b,\Delta_3)\} \cup \{(1,\tau,c\Delta_2\oplus (1-c)\Delta_3) | c\in [0,1]\},1)$.
Clearly this is not a finite PA.
Generalising from this automaton, for $\mathfrak{A}\subseteq [0,1]$ we may define
$P_{\mathfrak{A}}:=(\{1,2,3\}, \{\tau, a, b\}, \{(2,a,\Delta_2), (3,b,\Delta_3)\} \cup \{(1,\tau,c\Delta_2\oplus (1-c)\Delta_3) | c\in \mathfrak{A}\},1)$.
Note that $\bot=P_{\{0,1\}}$, $\top = P_{[0,1]}$.
But now it is clear how the set $\nicefrac{\mathfrak{Q}^*_\sim (P)}{\someequiv}$ must look like:
$\nicefrac{\mathfrak{Q}^*_\sim (P)}{\someequiv}=\{P_{\mathfrak{A}} | \mathfrak{A}\subseteq [0,1], 0\in \mathfrak{A}, 1 \in \mathfrak{A} \}$
Leaving out $0$ or $1$ from $\mathfrak{A}$ will break bisimilarity.


\section{Conclusion}
\label{sec:conclusion}
This paper extends the notion of normal forms introduced in \cite{tacas:13} to the case of compact automata with countably infinite state space, countably
infinite set of actions and possibly uncountably many transitions.
We have justified the canonicity of normal forms by introducing them as
the
intersection of all bisimilar automata, not from
an abstract point of view as in \cite{tacas:13}.
The structure presented is nice as a theoretical result,
but (at least for the moment) there is no immediate practical applicability,
as the hard part is to construct the sets 
$\mathfrak{Q}_\sim (P)$ and $\mathfrak{Q}_\approx (P)$ where an uncountably infinite number of automata would have to be constructed. 
The structure itself is particularly nice for teaching purposes and for better under\-stand\-ing the possible 'shapes' of infinite automata.
\js{don't understand the meaning of this sentence.}
\ms{Wenn Du den Satz in Klammer meinst, der ist von Dir selbst ... Ich denke, gemeint ist, dass mann nur die Extrempunkte der konvexen Mengen benoetigt, oder?}
\js{Hab den Satz rausgenommen}
\ms{Auch geaendert, bitte lesen!}
\js{maybe we could add here (again), that constructing finite quotients has polynomial complexity according to the TACAS paper. So this part would not be too hard.}

\bibliographystyle{eptcs}
\bibliography{./local}

\end{document}